\documentclass[journal,twoside,web]{IEEEcolor}
\usepackage{generic}
\usepackage{cite}
\usepackage{amsmath,amssymb,amsfonts}
\usepackage{graphicx}
\usepackage{algorithm,algorithmic}
\usepackage{hyperref}
\hypersetup{hidelinks=true}
\usepackage{textcomp}
\usepackage{lipsum}
\usepackage{dsfont}
\usepackage{empheq,mathtools}

\def\BibTeX{{\rm B\kern-.05em{\sc i\kern-.025em b}\kern-.08em
    T\kern-.1667em\lower.7ex\hbox{E}\kern-.125emX}}

\markboth{\hskip25pc IEEE TRANSACTIONS on Automatic Control}
{Le Liu \MakeLowercase{\textit{et al.}}: Privacy protection under the exposure of systems' prior information}
\newtheorem{secthm}{Theorem}[section]
\newtheorem{seccor}[secthm]{Corollary}
\newtheorem{seclem}[secthm]{Lemma}

\newtheorem{secdefn}[secthm]{Definition}
\newtheorem{secrem}[secthm]{Remark}
\newtheorem{secasm}[secthm]{Assumption}

\newcommand{\rank}{{\operatorname{rank}}}
\newcommand{\tr} { {\operatorname{trace}}}

\newcommand{\bE} { {\mathbb E}}
\newcommand{\bP} { {\mathbb P}}
\newcommand{\bR} { {\mathbb R}}
\newcommand{\bS} { {\mathbb S}}
\newcommand{\bZ} { {\mathbb Z}}

\newcommand{\fF} { {\mathbf F}}

\newcommand{\cB} { {\mathcal B}}
\newcommand{\cF} { {\mathcal F}}
\newcommand{\cJ} { {\mathcal J}}

\newcommand{\cS} { {\mathcal S}}

\newcommand{\cN} { {\mathcal N}}

\newcommand{\esssup}{\operatorname*{ess\,sup}}

\def\red{\hfill $\lhd$}

\begin{document}

\title{\LARGE \bf
Privacy protection under the exposure of systems' prior information
}

\author{
    Le Liu, Yu Kawano, \it{Member, IEEE}, and Ming Cao, \it{Fellow, IEEE}
    \\
     \thanks{The work of Liu and Cao was supported in part by the Netherlands Organization for Scientific Research (NWO-Vici-19902). The work of Liu is partly supported by Chinese Scholarship Council. The work of Yu Kawano was supported in part by JSPS KAKENHI Grant Number JP25K22804.}
\thanks{Le Liu and Ming Cao are with the Faculty of Science and Engineering, University of Groningen, 9747 AG Groningen, The Netherlands {\tt \small \{le.liu, m.cao\}@rug.nl}}
    \thanks{Yu Kawano is with the Graduate School of Advance Science and Engineering, Hiroshima University, Higashi-Hiroshima 739-8527, Japan 
        {\tt\small ykawano@hiroshima-u.ac.jp}}
}
\maketitle

\begin{abstract}
For systems whose states implicate sensitive information, their privacy is of great concern. While notions like differential privacy have been successfully introduced to dynamical systems, it is still unclear how a system's privacy can be properly protected when facing the challenging yet frequently-encountered scenario where an adversary possesses prior knowledge, e.g., the steady state, of the system. This paper presents a new systematic approach to protect the privacy of a discrete-time linear time-invariant system against adversaries knowledgeable of the system's prior information. We employ a tailored \emph{pointwise maximal leakage (PML) privacy} criterion. PML characterizes the worst-case privacy performance, which is sharply different from that of the better-known mutual-information privacy. We derive necessary and sufficient conditions for PML privacy and construct tractable design procedures. Furthermore, our analysis leads to insight into how PML privacy, differential privacy, and mutual-information privacy are related. We then revisit Kalman filters from the perspective of PML privacy and derive a lower bound on the steady-state estimation-error covariance in terms of the PML parameters. Finally, the derived results are illustrated in a case study of privacy protection for  distributed sensing in smart buildings.
\end{abstract}

\begin{keywords}
Privacy, Pointwise Maximal Leakage, Gaussian mechanisms, Kalman filter
\end{keywords}

\section{Introduction}

Cyber-physical systems, including smart grids \cite{yu2016smart}, autonomous mobility \cite{venancio2023cps}, and the industrial Internet of Things~\cite{sisinni2018industrial}, rely on the continuous exchange of time-series measurements for monitoring, estimation, and control. Although the increasing volume of available data enhances situational awareness and decision-making, it simultaneously increases the risk of exposing sensitive operational states to eavesdroppers or untrusted aggregators \cite{koufogiannis2017differential, han2018privacy}. This underscores the need for systematic mechanisms that can quantify and constrain what an adversary may infer from shared signals, while preserving the functional utility of cyber-physical systems.

To preserve confidentiality against privacy threats, data are often sanitized prior to transmission. Representative approaches include k-anonymity \cite{sweeney2002k}, l-diversity \cite{machanavajjhala2007diversity}, t-closeness \cite{li2006t}, and differential privacy (DP) \cite{dwork2006calibrating, dwork2006our}, along with its extensions such as concentrated DP \cite{dwork2016concentrated}, Rényi DP \cite{mironov2017renyi}, Gaussian DP \cite{dong2022gaussian}, and age DP \cite{zhang2023age}. These techniques have been successfully applied in domains such as smart grids \cite{zhao2014achieving}, health monitoring \cite{sivakumar2024addressing}, and blockchain systems \cite{zhang2019security}. However, these notions do not account for prior information and may become inefficient when such information is publicly available \cite{jiang2021context}. To address this limitation, several information-theoretic privacy notions have been proposed, including mutual-information (MI) privacy \cite{wang2016relation}, Fisher information privacy \cite{farokhi2019ensuring}, local information privacy \cite{jiang2021context}, and pointwise maximal leakage (PML) privacy \cite{saeidian2023pointwise_tit}. In this paper, we aim to tailor PML privacy to the protection of discrete-time linear time-invariant (LTI)  systems against adversaries knowledgable of prior information and to develop a systematic framework for Gaussian mechanism design.


\smallskip

\subsubsection*{Literature Review}
In dynamical systems, most privacy-preserving strategies fall into two broad categories: encryption-based schemes and noise-injection mechanisms. In the former, signals are encrypted and later recovered using secret keys. For example, \cite{altafini2019dynamical} encrypts system dynamics to protect the initial state while preserving average consensus in multi-agent networks, and the state-decomposition method in \cite{wang2019privacy} can likewise be interpreted as encryption for safeguarding initial conditions. Although such approaches can achieve strong nominal performance, their privacy guarantee depends on key secrecy and is thus vulnerable to side information \cite{hosseinalizadeh2024preserving}.

In contrast, noise-injection approaches offer robustness against side information, albeit at the expense of data quality. A canonical example is the use of decaying Laplacian noise in average consensus \cite{nozari2017differentially}, which intentionally sacrifices exact agreement on the true average to enable a tunable accuracy–privacy trade-off. Using privacy-preserving consensus, distributed optimization methods have been further developed to protect local objective function information during information sharing \cite{wang2023tailoring, wang2024robust, chen2024local}, and distributed Kalman filtering has been investigated through consensus-based schemes with noise injection for networked estimation \cite{moradi2022privacy,ke2025privacy}. 
Beyond consensus problems, privacy-preserving control has been studied through various DP mechanisms that balance control performance and privacy protection, particularly in tracking control~\cite{Yu2020,Yu2021,WKW:25,liu2024design} and linear–quadratic control~\cite{ITO2021109732,yazdani2022differentially}. In addition, noise-injection mechanisms for safeguarding the states of dynamical systems have been investigated in the contexts of initial-state protection~\cite{Yu2020,wang2023differential,liu2025initial} and differentially private Kalman filtering~\cite{le2013differentially}. However, these approaches do not explicitly account for prior information.

MI is a well known metric for quantifying privacy when prior information is available. For example, privacy-aware estimation has been formulated using MI \cite{weng2025optimal}, model randomization has been proposed to limit information leakage under MI \cite{nekouei2022model}, and privacy maximization with quantized sensor measurements has been studied via MI-based formulations \cite{murguia2021privacy}. However, because MI evaluates expected information leakage, it is not well suited to capturing rare events or worst-case scenarios. In contrast, the PML \cite{saeidian2023pointwise_tit,grosse2024extremal} provides worst-case performance guarantees with a clear operational interpretation and quantifies an adversary’s maximal information gain. Moreover, PML is robust to prior information misspecification and enjoys clean data-processing and composition properties, enabling transparent analysis and control of cumulative leakage. Despite these advantages, a PML-based framework for privacy protection in dynamical systems remains undeveloped. Existing works have primarily focused on discrete spaces \cite{saeidian2023pointwise_tit}, while \cite{grosse2024extremal} considered arbitrary probability spaces but did not exploit the structure of Gaussian distributions for deeper analytical insights.


\smallskip

\subsubsection*{Contributions}
In this paper, we aim to establish a systematic PML framework for privacy protection in discrete-time LTI Gaussian systems against adversaries with prior information, by leveraging the inherent structure of Gaussian distributions. We begin by extending the definition of PML \cite{grosse2024extremal} to the setting where (not necessarily Gaussian) random variables admit probability density functions, and by proving two fundamental properties of PML—non-negativity and a minimum value of zero—consistent with the discrete case \cite{saeidian2023pointwise_tit}.

Focusing on the static Gaussian case, we derive necessary and sufficient conditions for PML privacy by exploiting properties of the joint Gaussian distribution. These conditions enable the development of a linear matrix inequality (LMI)–based approach for designing privacy-preserving mechanisms that achieve a desired PML privacy level. Furthermore, we establish connections between PML privacy, DP, and MI privacy. More specifically, PML privacy guarantees a certain level of DP, and the converse also holds. A similar relationship is shown between PML privacy and MI privacy.

We then apply the proposed framework to a private distributed sensing problem in a network of discrete-time LTI Gaussian systems. To this end, we revisit the Kalman filter from the perspective of PML privacy, treating the system state and output as private and public information, respectively. In particular, we derive a lower bound on the steady-state estimation error covariance in terms of the PML parameters, which naturally implies that achieving a higher PML privacy level results in higher estimation error. Building on these results, we formulate the design of a private distributed sensing mechanism as a convex optimization problem that can be solved efficiently. The effectiveness of the proposed method is demonstrated through a smart-building example for multi-area climate monitoring.

The main contributions of this paper are summarized as follows:

\begin{enumerate}
  \item A comprehensive PML framework is developed in the Gaussian setting, and a necessary and sufficient condition for PML privacy is derived by leveraging the structure of the Gaussian distribution;
  \item We provide a Gaussian mechanism synthesis for privacy protection in terms of LMIs, resulting in tractable design methods;
    \item The relationships between PML privacy, DP, and MI privacy are investigated by demonstrating how the PML privacy parameters relate to those of the other frameworks.
    \item A connection between PML privacy and the Kalman filter is established by deriving an explicit lower bound on the steady-state estimation error covariance of the Kalman filter as a function of the PML privacy budget;
    \item We propose a convex optimization formulation for the privacy-aware distributed sensing problem under PML privacy constraints.
\end{enumerate}


\subsubsection*{Organization}
The remainder of this paper is organized as follows. Section~\ref{sec:MD} presents a motivating example and the definition of PML privacy.
Section~\ref{sec:static} establishes a necessary and sufficient condition for PML privacy in the Gaussian setting and introduces a Gaussian mechanism that can be designed to achieve a desired PML privacy level by solving an LMI. Moreover, relations of PML privacy with DP and MI privacy are presented. Section~\ref{sec:filter} investigates the connection between PML privacy and the Kalman filter, showing that the posterior covariance of the steady-state estimation error is lower-bounded by the PML privacy level. Section~\ref{sec:aggregation} {illustrates} the proposed results through a private distributed sensing problem.


\smallskip

\subsubsection*{Notation}
The sets of real numbers, integers and non–negative integers are denoted by $\bR$, $\bZ$ and $\bZ_{+}$, respectively. 
For $n\in\bZ_{+}$, let $\bS^{n}$ be the set of $n\times n$ real symmetric matrices, and let $\bS^{n}_{+}$ and $\bS^{n}_{++}$ denote the positive semi-definite and positive definite cones, respectively. 
For $P,Q\in\bS^{n}$, $P \succeq Q$ (resp. $P \succ Q$) means $P-Q\in\bS^{n}_{+}$ (resp. $P-Q\in\bS^{n}_{++}$). For $P \in \bS^{n}$,
$\lambda_{\min}(P) := \lambda_1(P) \le \lambda_2 (P) \le \cdots \le \lambda_n(P) =:\lambda_{\max}(P)$ denote its eigenvalues. For $A \in \bR^{n \times n}$, $\det(A)$ and $\tr(A)$ denote its determinant and trace, respectively. 
The identity matrix of dimension $n$ is denoted by $I_n$. For vectors and matrices, the $2$-norm is denoted by $|\cdot | $. The vector $2$-norm weighted by $P \succ 0$ is denoted by $|x|_P:=\sqrt{x^\top P x}$.

A probability space is denoted by $(\Omega,\cF,\bP)$, where $\Omega$ is the sample space, $\cF$ is a $\sigma$-algebra, and $\bP$ is a probability measure \cite{durrett2019probability}. For the sake of notational simplicity, an~$\bR^n$-valued random variable~$X: (\Omega, \cF) \to (\bR^n, \cB(\bR^n))$ is described by~$X: \Omega \to \bR^n$, where~$\cB(\bR^n)$ denotes the Borel $\sigma$-algebra on~$\bR^n$. Accordingly, the set of the $\bR^n$-valued random variables $X: (\Omega, \cF) \to (\bR^n, \cB(\bR^n))$ is denoted by $\cB(\Omega,\bR^n)$. For $X \in \cB (\Omega, \bR^n)$ and $Y \in \cB (\Omega,\bR^m)$, let $f_X(x)$ and $f_{X \mid Y}(x \mid y)$  denote the probability density function (PDF) of $X$ and the conditional PDF of $X$ given the observation $Y = y$, respectively. Also, $\bP_{f_X}(A)$ and $\bP_{f_{X \mid Y}}(A \mid y)$ denote the unconditional probability that $X$ is in $A$ and the conditional probability that $X$ is in $A$ given $Y=y$, respectively.
The expectation of a random variable is denoted by $\bE[\cdot]$. A Gaussian distribution with mean $\mu$ and covariance $\Sigma\succeq0$ is written $\cN(\mu,\Sigma)$. The chi–squared distribution with $\ell$ degrees of freedom is denoted by $\chi^{2}_{\ell}$; its cumulative distribution function (CDF) is denoted by $\fF_{\chi_{\ell}}(\cdot)$, and its quantile function is $\fF^{-1}_{\chi_{\ell}}(\cdot)$.



\section{Motivation and Definitions}\label{sec:MD}
Privacy preservation is a critical concern in modern society, as system outputs may inadvertently disclose sensitive user information. A common defense strategy is to inject noise into released data. Differential privacy (DP)~\cite{dwork2006differential} provides a systematic way for noise design and privacy quantification. However, DP does not account for adversarial prior knowledge, making it difficult to leverage such information. In contrast, mutual information (MI) privacy~\cite{wang2016relation} and local information privacy~\cite{jiang2021context} explicitly exploit prior knowledge for privacy protection. Yet, MI lacks interpretability for rare events, while enforcing local information privacy is often challenging. To address these limitations, we focus on \emph{pointwise maximal leakage} (PML), which characterizes the worst-case inference by an adversary; see~\cite{saeidian2023pointwise_tit} for discrete spaces and \cite{saeidian2023pointwise} for arbitrary space. Tailoring PML to the Gaussian setting, our goal in this paper is to develop a systematic framework for the design of noise-adding privacy mechanisms.

In this section, we first introduce a real-world example to illustrate the importance of PML, and then provide its definition along with basic properties in the Gaussian setting.


\subsection{Motivating Example}
\label{sec:example}
We consider a simplified model of a smart building's temperature management system that stabilizes room temperatures around target values \cite{nekouei2022model, weng2025optimal, alisic2020ensuring}. The building is divided into $N$ zones, such as offices, meeting rooms, and residential units. The deviation of zone $i$'s temperature from its target value is described by the randam variable $X_i$, which at time $k \in \bZ_+$ fluctuates according to
\begin{align*}
    X_{i,k+1} = a_i X_{i, k} + b_i W_{i,k}, \quad i = 1,\dots, N.
\end{align*}
Here,  $W_{i,k}$ denotes the influence of human occupancy, whose randomness is  approximated by a short-memory Gaussian process, namely $W_{i,k} \sim\mathcal N(0,q_i)$ \cite{oldewurtel2012use, oldewurtel2013stochastic}; the scalar $a_i\in (-1, 1)$ governs the diminishing fluctuating temperature deviation, and $b_i \in \bR$ scales the effect of human occupancy. While the exact values of $a_i$, $b_i$ and $q_i$ are publicly available, e.g. through historical data,  the realization of $W_{i,k}$, in sharp contrast, is strictly private; for example, an office user's pattern of opening windows, or a resident's preference of sleeping times is a piece of personal information reflected in 
$W_{i,k}$. To protect privacy, the true value of the temperature deviation $X_{i,k}$ should not be transmitted directly to the temperature management system, because the transmission can be eavesdropped by internal or external adversaries. So what should be transmitted instead is the noisy version
\[
Y_{i,k} = X_{i,k} + V_{i,k},
\]
where $V_{i,k}$ denotes the additive zero-mean Gaussian noise for privacy protection. Since private $W_{i,k}$ are reflected in $X_{i,k}$ transmitted as $Y_{i,k}$ (i.e., $W_{i,k}\!\to\!X_{i,k}\!\to\!Y_{i,k}$), we take $X_{i,k}$ to be the private information to be protected. 

The noisy, thus privatized, $Y_{i,k}$ are transmitted to the temperature management center, which aggregates the received privatized information to fulfill its functionalities, e.g. to compute the overall average temperature 
\begin{align*}
    \bar Y_k &= \frac{1}{N} \sum_{i=1}^N {Y}_{i,k}.
\end{align*}
 
Since $a_i\in (-1, 1)$, the steady-state distribution of $X_{i,k}$, {denoted by $X^{i}$,} is uniquely determined by
\[
X^{i} \sim \cN(0,  \Sigma_{i, XX}), \quad \Sigma_{i, XX} = \frac{b_i^2 q_i}{1 - a_i^2}.
\]
This implies that an adversary can, among other things, deduce the steady state distribution of $X_{i,k}$ even without tapping into $Y_{i,k}$, which further highlights the more stringent requirement for any effective privacy-preserving mechanisms. In fact, to reduce the risk of privacy breach, one has to take into account the fact that an adversary has access to prior knowledge of the distributions of those random variables of interest. To handle this challenge, in what follows, we explain how we employ PML as a privacy metric and use it to develop a systematic mechanism for privacy protection.

%


\subsection{Pointwise Maximal Leakage}
Motivated by the example in the previous subsection, we investigate  PML in the Gaussian case, which yields a tractable mechanism design for privacy protection. PML was first introduced in\cite[Definition 2]{saeidian2023pointwise} and its name comes from the Pointwise Maximum discrepancy (and thus Leakage) between the prior and posterior probability of an event $A$ after observing $y$; more precisely the leakage $\ell (A \rightarrow y) $ is 
\begin{equation*}
    \log \underset{A \in \cB(\bR^n)}{\sup} 
    \frac{\bP_{f_{X \mid Y}}(A \mid {y})}{\bP_{f_X}(A)}.
\end{equation*}
To make this concept more applicable to the discussions of random variables and their distributions, we redefine it with respect to probability density functions as follows.
\begin{secdefn}\label{def:pml_metric}
    For $X \in \cB (\Omega, \bR^n)$ and $Y \in \cB (\Omega,\bR^m)$, assume that $f_{X\mid Y}(x \mid y)$ is absolutely continuous with respect to $f_X(x)$. Then, the {\emph{pointwise maximal leakage} (PML)} from $X$ to $Y = y$ is defined by
    \begin{align}
    \label{eq:PML}
        \ell (X \rightarrow y)  := \log \underset{x \in \bR^n}{\esssup} \frac{f_{X \mid Y}(x \mid y)}{f_X(x)}.
    \end{align}
 \red
\end{secdefn}

\begin{secrem}
    One notices that by integrating the probability desity function, it holds that $\ell(X \rightarrow y) \le \varepsilon$ implies $\ell(A \rightarrow y) \le \varepsilon$, and so our notation of privacy is in general stricter than that of \cite[Definition 2]{saeidian2023pointwise}. 
\red
\end{secrem}

The privacy metric \( \ell(X \rightarrow y) \) increases as the discrepancy between the posterior distribution \( f_{X \mid Y}(x \mid y) \) and the prior distribution \( f_X(X) \) grows, indicating a degradation in privacy. {From Definition~\ref{def:pml_metric}, it is expected that \( \ell(X \rightarrow y)  \) is always non-negative, and \( \ell(X \rightarrow y) = 0 \) implies maximal privacy. While this property has been established for discrete probability spaces~\cite{saeidian2023pointwise_tit}, its validity for general spaces still needs to be verified. This is what we do next. To simplify notation, we define 
$r(x) :=\frac{f_{X \mid Y}(x \mid  y)}{f_X(x)}$.

\begin{seclem}
\label{lem:PML_basic}
The PML defined in~\eqref{eq:PML} is non-negative. Moreover, for any given $Y=y$, $\ell (X \rightarrow y) = 0$ holds if and only if
    \begin{align}\label{eq:PML_basic}
         r(x) = 1 \; a.s.
    \end{align}
\end{seclem}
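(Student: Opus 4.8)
The plan is to reduce everything to the single normalization identity
\[
\int_{\bR^n} r(x)\, f_X(x)\, dx \;=\; \int_{\bR^n} f_{X\mid Y}(x \mid y)\, dx \;=\; 1,
\]
which holds because $f_{X\mid Y}(\cdot \mid y)$ is absolutely continuous with respect to $f_X$: on the set $\{x : f_X(x) = 0\}$ one also has $f_{X\mid Y}(x\mid y) = 0$ for a.e.\ $x$, so this set contributes nothing to either integral, while on its complement $r$ is well defined and $r(x) f_X(x) = f_{X\mid Y}(x\mid y)$. Throughout, ``a.e.'', ``a.s.'', and $\esssup$ are to be read with respect to the law $\bP_{f_X}$ of $X$ (equivalently, Lebesgue measure restricted to $\{f_X > 0\}$), and I would state this explicitly at the outset.

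For non-negativity I would argue by contradiction. If $\esssup_{x} r(x) = c < 1$, then $r(x) \le c$ for $\bP_{f_X}$-a.e.\ $x$, and the identity above gives $1 = \int_{\bR^n} r(x) f_X(x)\, dx \le c \int_{\bR^n} f_X(x)\, dx = c < 1$, a contradiction. Hence $\esssup_{x} r(x) \ge 1$, and therefore $\ell(X \rightarrow y) = \log \esssup_{x} r(x) \ge \log 1 = 0$.

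For the equivalence, the direction \eqref{eq:PML_basic} $\Rightarrow \ell(X\rightarrow y) = 0$ is immediate: $r(x) = 1$ a.s.\ forces $\esssup_{x} r(x) = 1$ and hence $\ell(X\rightarrow y) = \log 1 = 0$. For the converse, suppose $\ell(X \rightarrow y) = 0$, i.e.\ $\esssup_{x} r(x) = 1$; then $1 - r(x) \ge 0$ a.s., and the identity rewrites as $\int_{\bR^n} \bigl(1 - r(x)\bigr) f_X(x)\, dx = 0$. A non-negative integrand with zero integral vanishes a.e., so $\bigl(1 - r(x)\bigr) f_X(x) = 0$ for a.e.\ $x$, which gives $r(x) = 1$ for $\bP_{f_X}$-a.e.\ $x$, i.e.\ \eqref{eq:PML_basic}.

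The only genuine obstacle is the measure-theoretic bookkeeping: pinning down precisely which measure governs the essential supremum and the ``a.s.'' statement, and confirming that the indeterminate ratio $0/0$ on $\{f_X = 0\}$ is harmless — which is exactly what the absolute-continuity hypothesis in Definition~\ref{def:pml_metric} provides. Once that is set up, the argument rests only on the elementary fact that a probability density integrates to one.
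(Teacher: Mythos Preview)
Your proposal is correct and follows essentially the same route as the paper: both hinge on the normalization identity $\bE_{f_X}[r(X)] = \int f_{X\mid Y}(x\mid y)\,dx = 1$, use it to force $\esssup r \ge 1$, and then combine $r \le 1$ a.s.\ with this identity to conclude $r = 1$ a.s.\ in the converse direction. The only differences are cosmetic --- you spell out the contradiction for non-negativity and the ``non-negative integrand with zero integral'' step explicitly, and you add welcome measure-theoretic bookkeeping about the reference measure and the $0/0$ set that the paper leaves implicit.
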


\begin{proof}
We first show non-negativity. 
It follows that
\begin{align}\label{pf1:PML_basic}
\bE_{f_X} [r(X)]
= \int_{\bR^n} f_{X\mid Y}(x \mid y) dx = 1.
\end{align}
This implies $\ell(X \rightarrow y) \geq 0$.
     
     We next prove the necessary and sufficient condition.
    {If \eqref{eq:PML_basic} holds, then $\underset{x \in \bR^n}{\esssup} \; r(x) =1$, yielding $\ell(X \to y) = \log 1 = 0$.} Conversely, if $\ell(X \to y) = 0$, then $\underset{x \in \bR^n}{\esssup} \; r(x) =1$, which implies $r(x) \leq 1$ almost surely. {From~\eqref{pf1:PML_basic}}, this implies $r(x) = 1$ almost surely.
\end{proof}

Lemma~\ref{lem:PML_basic} implies that a smaller value of the PML $\ell (X \rightarrow y)$ indicates that an eavesdropper gains less additional information about the {private} variable $X$ by observing $y$; see Fig.~\ref{fig:PMl_demo} illustrating how PML quantifies the discrepancy between the prior and posterior distributions. In particular when $\ell (X \rightarrow y) = 0$, an eavesdropper gains no additional information. This property is analogous to the privacy guarantee offered by DP~\cite{dwork2006differential}.

Also taking the distribution of $Y$ into account, we define $(\varepsilon, \delta)$-PML privacy as follows.
\begin{secdefn}
\label{def:pml_privacy}
 ($(\varepsilon, \delta)$-PML {privacy})
For $Y \in \cB (\Omega,\bR^m)$, we say $X \in \cB (\Omega,\bR^n)$ is  \emph{$(\varepsilon, \delta)$-PML private} if there exist $\varepsilon \geq 0$ and $\delta \in [0,1]$ such that
\begin{align}
\label{def:PML}
    \bP_{f_Y}[\ell(X \rightarrow Y) \leq \varepsilon] \geq 1-\delta
\end{align}
holds. 
\red
\end{secdefn}

\begin{figure}
    \centering
    \includegraphics[width=1\linewidth]{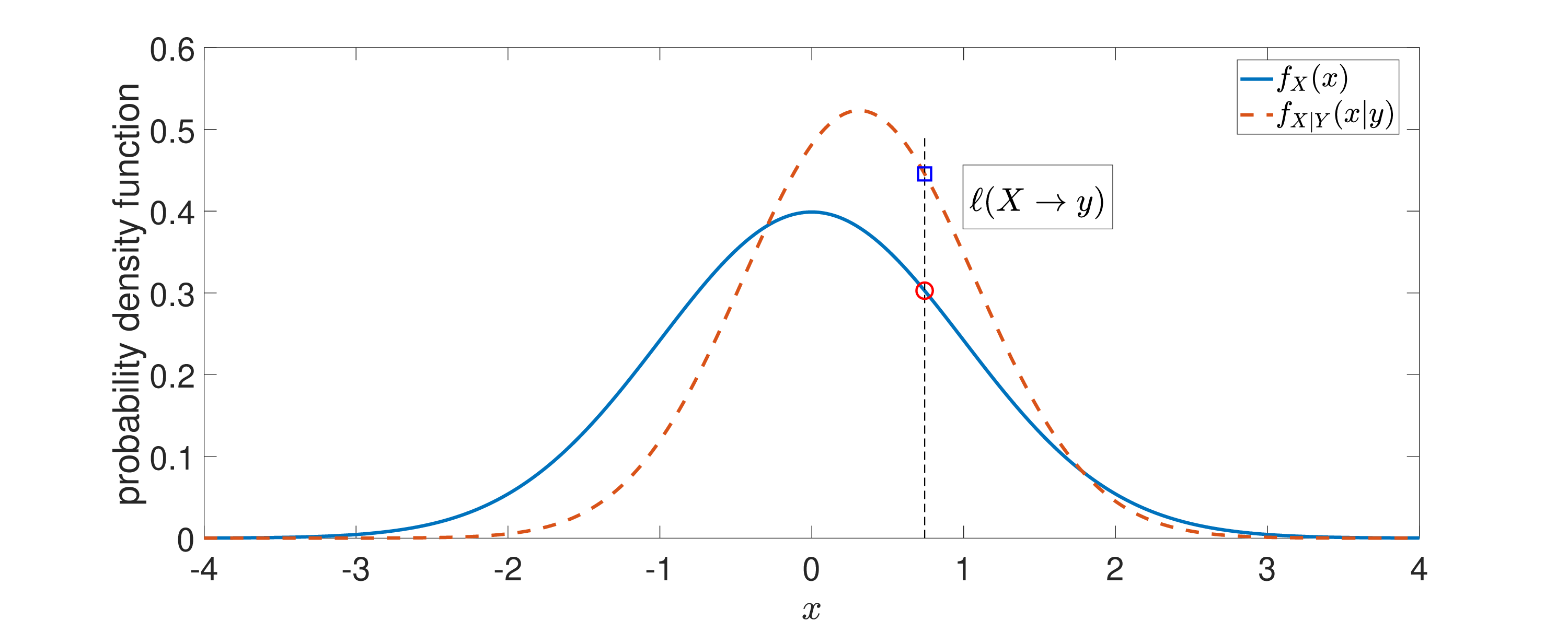}
    \caption{\textbf{Demonstration of PML}. The plot overlays the prior density $f_X(x)$ with the posterior density $f_{X\mid Y=y}(x)$ for a scalar Gaussian model.
  The dashed vertical line marks the maximizer $x^\star$ of the log-density ratio $\log\!\big(f_{X\mid Y}(x\mid y)/f_X(x)\big)$.
  At $x^\star$, the blue square and red circle indicate the posterior and prior densities, respectively.
  Intuitively, $\ell$ quantifies the pointwise amplification of belief induced by observing $Y=y$: large measurement noise yields a posterior close to the prior and a small $\ell$, whereas more informative measurements produce a sharper posterior and a larger $\ell$.}
    \label{fig:PMl_demo}
\end{figure}

    The additional parameter $\delta$ represents the probability of failure in satisfying the privacy bound $\ell(X \rightarrow y) \leq \varepsilon$, aligning with the notion of $(\varepsilon_{\rm DP}, \delta_{\rm DP})$-DP~\cite{dwork2006our}. Consequently, similar to $(\varepsilon_{\rm DP}, \delta_{\rm DP})$-DP, smaller values of $(\varepsilon, \delta)$ indicate stronger privacy guarantees by $(\varepsilon, \delta)$-PML privacy.

    In this paper, we investigate privacy preservation in linear time-invariant (LTI) systems under the $(\varepsilon,\delta)$-PML privacy framework. However, even in the static Gaussian case, mechanisms for achieving $(\varepsilon,\delta)$-PML privacy {has not been investigated before}. Therefore, we in the next section analyze the static Gaussian case, after which we extend our results to dynamical systems.



\section{Pointwise Maximal Leakage in Static Gaussian Cases}
\label{sec:static}
In this section, we first derive a necessary and sufficient condition for $(\varepsilon,\delta)$-PML privacy. We then present an LMI-based criterion to facilitate noise design for privacy protection. These results form a foundation for $(\varepsilon,\delta)$-PML privacy protection in dynamical systems.

\subsection{A Necessary and Sufficient Condition for $(\varepsilon,\delta)$-PML}
To analyze $(\varepsilon,\delta)$-PML, we assume the following property for $X \in \cB (\Omega,\bR^n)$ and $Y  \in \cB (\Omega,\bR^m)$.
\begin{secasm}
\label{asm:XY}
{Assume that $X \in \cB (\Omega,\bR^n)$ and $Y \in \cB (\Omega,\bR^m)$} follow a joint non-degenerate Gaussian distribution $\cN (\mu, \Sigma)$ with
    \begin{align}
    \label{eq:mu_sigma}
        \mu = \begin{bmatrix}
            \mu_X \\ \mu_Y
        \end{bmatrix}, \quad \Sigma = \begin{bmatrix}
            \Sigma_{XX} & \Sigma_{XY} \\ \Sigma_{XY}^\top & \Sigma_{YY}
        \end{bmatrix} \succ 0,
    \end{align}
    where $\rank(\Sigma_{XY}) = l > 0$ and $\mu_X \in \bR^n$, $\mu_Y \in \bR^m$, $\Sigma_{XX}\in \bR^{n \times n}$, $\Sigma_{XY}\in \bR^{n \times m}$, and $\Sigma_{YY}\in \bR^{m \times m}$.
    \red
\end{secasm}

This is a very mild assumption because $l > 0$ is arbitrary. If $l = 0$, then \(Y\) and \(X\) are independent. In the independent case, observing \(Y\) cannot affect the privacy of \(X\).

Under Assumption~\ref{asm:XY}, we obtain a closed-form expression of the PML $\ell(X \rightarrow y)$, which is later used to derive a necessary and sufficient condition for $(\varepsilon,\delta)$-PML privacy. This is the first main result of this paper.

\begin{secthm}
\label{thm:joint_g}
    Under Assumption~\ref{asm:XY}, the PML from $X$ to $y$ is given by
    \begin{align}
    \label{eq:jg_pml}
         &\ell(X \rightarrow y) 
    = \log \det(\Sigma_{XX}) - \log \det(\Gamma) + \frac{1}{2} \xi
    \end{align}
    with
    \begin{subequations}\label{eq2:jg_pml}
    \begin{align}
    \xi &:= |\Sigma_{YY}^{-1}(y-\mu_Y)|_{\Psi^{\frac{1}{2}} U U^{\top} \Psi^{\frac{1}{2}} + \Sigma_{XY}^\top\Sigma_{XX}^{-1}\Sigma_{XY}}^2, \label{eq22:jg_pml}\\
    \Psi &: =\Sigma_{YY}- \Sigma_{XY}^\top \Sigma_{XX}^{-1} \Sigma_{XY}, \textrm{ and}  \label{eq23:jg_pml}\\
    \Gamma &:= \Sigma_{XX} - \Sigma_{XY}\Sigma_{YY}^{-1}\Sigma_{XY}^\top,  \label{eq21:jg_pml}
    \end{align}
    where $U \in \bR^{m \times l}$ satisfying $U^{\top} U = I_{l}$ is defined by a compact singular value decomposition:
     \begin{align} \label{eq24:jg_pml}
    \Psi^{-\frac{1}{2}}\Sigma_{XY}^\top\Sigma_{XX}^{-1} = U D V^{\top}
    \end{align}
    \end{subequations}
    with the diagonal $D \in \bS_{++}^l$ and $V \in \bR^{n \times l}$ such that $V^{\top} V = I_l$.    
\end{secthm}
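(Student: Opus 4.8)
My plan is to compute the density ratio $r(x) = f_{X\mid Y}(x\mid y)/f_X(x)$ in closed form using the standard Gaussian conditioning formulas, then locate its essential supremum over $x$, which amounts to a finite-dimensional optimization of a quadratic exponent.

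First I would recall that under Assumption~\ref{asm:XY}, $X\sim\cN(\mu_X,\Sigma_{XX})$ and $X\mid Y=y \sim \cN(\mu_X + \Sigma_{XY}\Sigma_{YY}^{-1}(y-\mu_Y),\ \Gamma)$ with $\Gamma = \Sigma_{XX}-\Sigma_{XY}\Sigma_{YY}^{-1}\Sigma_{XY}^\top$ as in \eqref{eq21:jg_pml}. Writing both densities explicitly and taking the ratio, the prefactor contributes $\tfrac12\log\det(\Sigma_{XX}) - \tfrac12\log\det(\Gamma)$ — but note the theorem statement has $\log\det(\Sigma_{XX})-\log\det(\Gamma)$ without the $\tfrac12$, so I would track the factors carefully; the exponential part contributes a quadratic form $q(x) = \tfrac12\big(\,|x-\mu_X|_{\Sigma_{XX}^{-1}}^2 - |x - \mu_{X\mid y}|_{\Gamma^{-1}}^2\,\big)$, where $\mu_{X\mid y} := \mu_X + \Sigma_{XY}\Sigma_{YY}^{-1}(y-\mu_Y)$. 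Then $\ell(X\to y) = \tfrac12\log\det\Sigma_{XX} - \tfrac12\log\det\Gamma + \sup_{x} q(x)$, and since the Hessian of $q$ is $\tfrac12(\Gamma^{-1} - \Sigma_{XX}^{-1})\succeq 0$ (because $\Gamma \preceq \Sigma_{XX}$), the supremum over the null space directions is attained — actually one must be slightly careful: $q$ is a quadratic whose quadratic part is positive semidefinite, so $\sup q = +\infty$ unless the linear part is orthogonal to the range of $\Gamma^{-1}-\Sigma_{XX}^{-1}$. I would verify that the linear coefficient automatically lies in that range (this is where $Y$ being a noisy/degenerate observation of $X$ matters), so that the $\esssup$ is finite; intuitively the ratio can blow up only in directions where the posterior is more concentrated than the prior, and in those directions the quadratic is coercive downward, so the maximizer exists.

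Carrying out the maximization: I would complete the square or set $\nabla q = 0$ to get the optimal $x^\star$, substitute back, and simplify the resulting expression for $\sup_x q(x)$ into the form $\tfrac14\xi$ with $\xi$ as in \eqref{eq22:jg_pml} — here the compact SVD \eqref{eq24:jg_pml} of $\Psi^{-1/2}\Sigma_{XY}^\top\Sigma_{XX}^{-1}$ enters to handle the rank-deficient case ($\rank\Sigma_{XY}=l$ possibly less than $\min(n,m)$): the matrix $UU^\top$ projects onto the $l$-dimensional relevant subspace, and combined with the identities $\Gamma^{-1} = \Sigma_{XX}^{-1} + \Sigma_{XX}^{-1}\Sigma_{XY}\Psi^{-1}\Sigma_{XY}^\top\Sigma_{XX}^{-1}$ (a Woodbury-type identity, with $\Psi$ as in \eqref{eq23:jg_pml}) and $\Sigma_{YY}^{-1} = \Psi^{-1} - \Psi^{-1}\Sigma_{XY}^\top\Sigma_{XX}^{-1}\Sigma_{XY}\Sigma_{YY}^{-1}$, the quadratic form collapses to the weighted norm of $\Sigma_{YY}^{-1}(y-\mu_Y)$ with weight $\Psi^{1/2}UU^\top\Psi^{1/2} + \Sigma_{XY}^\top\Sigma_{XX}^{-1}\Sigma_{XY}$.

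The main obstacle I anticipate is the algebraic bookkeeping in the rank-deficient reduction — keeping $\Psi^{-1/2}$, the SVD factors $U,D,V$, and the Woodbury identity aligned so that everything telescopes into exactly \eqref{eq22:jg_pml}, and simultaneously making sure the potentially-infinite supremum is handled rigorously (i.e., proving the linear term never escapes the range of the semidefinite Hessian, equivalently that the maximizer always exists along the constrained subspace). Once that reduction is pinned down, matching the $\log\det$ terms and the factor of $\tfrac12$ on $\xi$ is routine.
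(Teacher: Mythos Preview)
Your approach is essentially the paper's: write the density ratio, reduce to optimizing a quadratic in $x$, use the Woodbury identity $\Gamma^{-1}=\Sigma_{XX}^{-1}+\Sigma_{XX}^{-1}\Sigma_{XY}\Psi^{-1}\Sigma_{XY}^\top\Sigma_{XX}^{-1}$ and the compact SVD \eqref{eq24:jg_pml} to solve for $x^\star$, then substitute back and simplify to \eqref{eq22:jg_pml}. One correction: the Hessian of your $q$ is $\Sigma_{XX}^{-1}-\Gamma^{-1}\preceq 0$, not $\succeq 0$; the quadratic is concave, and the finiteness issue is that on the null space of $\Gamma^{-1}-\Sigma_{XX}^{-1}$ the function is affine, so you need the linear coefficient to lie \emph{in the range} (not orthogonal to it). The paper sidesteps this check cleanly by rewriting the exponent (after Woodbury) as
\[
\Pi \;=\; \big|\Sigma_{XY}^\top\Sigma_{XX}^{-1}(x-\mu_X)-(y-\mu_Y)\big|_{\Psi^{-1}}^2 \;+\;\text{terms independent of }x,
\]
which is manifestly a nonnegative least-squares objective in $x$; the minimizer then follows directly from the SVD as $x^\star=\mu_X+VD^{-1}U^\top\Psi^{-1/2}(y-\mu_Y)$, and the residual is $|(I-UU^\top)\Psi^{-1/2}(y-\mu_Y)|^2$. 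The remaining work (matching this with \eqref{eq22:jg_pml}) is pure algebra using $\Sigma_{YY}=\Psi+\Sigma_{XY}^\top\Sigma_{XX}^{-1}\Sigma_{XY}$ and the SVD relations, exactly as you anticipate.
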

\begin{proof}
    See Appendix \ref{app:thm:joint_g}.
\end{proof} 

From~\eqref{eq:jg_pml}, if we consider $\ell(X \rightarrow Y)$ with random variable $Y$, only $\xi$ becomes a random variable. In fact, it is possible to show that $\xi$ follows an $l$-freedom $\chi^2$ distribution. Consequently, we derive the following necessary and sufficient condition for \((\varepsilon, \delta)\)-PML privacy as the second main result of this paper.

\begin{secthm}
\label{thm:privacy_con}
   For $X$ and $Y$ satisfying Assumption~\ref{asm:XY},  $X$ is $(\varepsilon, \delta)$-PML private if and only if 
    \begin{align}
    \label{eq:privacy_con}
        \frac{1}{2}\fF^{-1}_{\chi^2_l}(1-\delta) \leq \varepsilon - \log \det(\Sigma_{XX}) + \log \det(\Gamma).
    \end{align}
\end{secthm}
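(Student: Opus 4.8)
The plan is to work directly from the closed-form expression for $\ell(X\rightarrow y)$ supplied by Theorem~\ref{thm:joint_g}. In \eqref{eq:jg_pml} the two log-determinant terms are deterministic, so when the observation is the random vector $Y\sim\cN(\mu_Y,\Sigma_{YY})$ all the randomness resides in $\xi$ from \eqref{eq22:jg_pml}; consequently the event $\{\ell(X\rightarrow Y)\le\varepsilon\}$ coincides with $\{\xi\le 2(\varepsilon-\log\det(\Sigma_{XX})+\log\det(\Gamma))\}$, and the privacy requirement \eqref{def:PML} becomes a tail statement for $\xi$. The crux is therefore the distributional claim announced after Theorem~\ref{thm:joint_g}: that $\xi$ follows the $\chi^2$ distribution with $l$ degrees of freedom.

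To prove this I would introduce $W:=\Sigma_{YY}^{-1/2}(Y-\mu_Y)\sim\cN(0,I_m)$, so that $\xi=W^\top N W$ with $N:=\Sigma_{YY}^{-1/2}\big(\Psi^{1/2}UU^\top\Psi^{1/2}+\Sigma_{XY}^\top\Sigma_{XX}^{-1}\Sigma_{XY}\big)\Sigma_{YY}^{-1/2}\in\bS^m_+$. By the classical characterization of quadratic forms in standard Gaussians, $\xi\sim\chi^2_l$ as soon as $N$ is symmetric and idempotent with $\tr(N)=l$; symmetry is immediate, so only idempotency and the trace need to be checked. From the compact SVD \eqref{eq24:jg_pml} one rewrites $\Sigma_{XX}^{-1}\Sigma_{XY}=VDU^\top\Psi^{1/2}$, hence $\Sigma_{XY}^\top\Sigma_{XX}^{-1}\Sigma_{XY}=\Psi^{1/2}UDEDU^\top\Psi^{1/2}$ with $E:=V^\top\Sigma_{XX}V\succ0$, so that the bracket in $N$ collapses to $\Psi^{1/2}U(I_l+DED)U^\top\Psi^{1/2}$. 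The identity that then does all the remaining work is
\begin{align*}
U^\top\Psi^{1/2}\Sigma_{YY}^{-1}\Psi^{1/2}U=(I_l+DED)^{-1},
\end{align*}
which I would derive from $\Sigma_{YY}=\Psi+\Sigma_{XY}^\top\Sigma_{XX}^{-1}\Sigma_{XY}$ (i.e.\ \eqref{eq23:jg_pml}), which gives $\Psi^{-1/2}\Sigma_{YY}\Psi^{-1/2}=I_m+UDEDU^\top$, followed by completing $U$ to an orthogonal matrix $[\,U\ \ U_\perp\,]$ and inverting block-diagonally. With this identity, $N^2=N$ and $\tr(N)=\tr\big((I_l+DED)(I_l+DED)^{-1}\big)=l$ follow at once, so $\xi\sim\chi^2_l$. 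This is the step I expect to be the main obstacle: it is precisely where the specific combination of covariance blocks appearing in \eqref{eq22:jg_pml} has to be recognized as the one that produces an orthogonal projection of rank exactly $l=\rank(\Sigma_{XY})$.

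The remainder is bookkeeping. Writing $\tau:=2\big(\varepsilon-\log\det(\Sigma_{XX})+\log\det(\Gamma)\big)$, the first paragraph gives $\bP_{f_Y}[\ell(X\rightarrow Y)\le\varepsilon]=\bP[\xi\le\tau]=\fF_{\chi^2_l}(\tau)$, so $(\varepsilon,\delta)$-PML privacy~\eqref{def:PML} is equivalent to $\fF_{\chi^2_l}(\tau)\ge 1-\delta$. Because $\fF_{\chi^2_l}$ is continuous and strictly increasing on $[0,\infty)$, this holds if and only if $\tau\ge\fF^{-1}_{\chi^2_l}(1-\delta)$; dividing by $2$ and rearranging gives precisely \eqref{eq:privacy_con}. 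The endpoint cases $\delta\in\{0,1\}$ are covered by the standard conventions $\fF^{-1}_{\chi^2_l}(1)=+\infty$ and $\fF^{-1}_{\chi^2_l}(0)=-\infty$ (infimum of $\{t\in\bR:\fF_{\chi^2_l}(t)\ge 1-\delta\}$), under which the equivalence still holds.
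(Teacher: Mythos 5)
Your proposal is correct and follows essentially the same route as the paper's proof: reduce \eqref{def:PML} to a tail probability for $\xi$, show the quadratic-form matrix $N$ (the paper's $\Xi$) is a symmetric idempotent of rank $l$ so that $\xi\sim\chi^2_l$, and then invert the strictly increasing CDF. The only differences are cosmetic — you establish the key inverse identity by completing $U$ to an orthogonal basis and use $\tr(N)=l$, whereas the paper applies the matrix inversion lemma and counts the rank of the idempotent matrix; both yield the same conclusion.
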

\begin{proof}
        See Appendix~\ref{app:privacy_con}.
    \end{proof}

Since prior information, in particular \( \Sigma_{XX} \) is given, the PML privacy level $(\varepsilon, \delta)$ can be improved by adjusting $\Gamma$ in order to increase \( \log \det(\Gamma)  \). This observation is consistent with the Bayesian estimation theory, where \( \log \det(\Gamma) \) serves as a measure of estimation uncertainty~\cite{Boyd2008}. When \( \log \det(\Gamma) \) approaches to \( \log \det(\Sigma_{XX}) \), the observation \( Y \) contributes little to reducing uncertainty, indicating less privacy leakage.

It is important to note that the PML privacy level \((\varepsilon, \delta)\) cannot be specified arbitrarily, as \( - \log \det(\Gamma) + \log \det(\Sigma_{XX}) > 0 \) must hold. Consequently, we have the following necessary condition for \((\varepsilon, \delta)\)-PML privacy.
\begin{seccor}\label{cor:privacy_con}
For $X$ and $Y$ satisfying Assumption~\ref{asm:XY}, if $X$ is $(\varepsilon, \delta)$-PML private, then
\begin{align}\label{asm:nece}
\frac{1}{2} \fF^{-1}_{\chi^2_l}(1 - \delta) < \varepsilon.
\end{align}
\end{seccor}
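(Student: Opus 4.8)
The plan is to deduce Corollary~\ref{cor:privacy_con} directly from Theorem~\ref{thm:privacy_con} by combining the characterizing inequality~\eqref{eq:privacy_con} with the strict positivity of the quantity $-\log\det(\Gamma)+\log\det(\Sigma_{XX})$ that was already noted in the discussion following the theorem. Concretely, if $X$ is $(\varepsilon,\delta)$-PML private, Theorem~\ref{thm:privacy_con} gives
\begin{align*}
\tfrac12\,\fF^{-1}_{\chi^2_l}(1-\delta)\;\le\;\varepsilon-\log\det(\Sigma_{XX})+\log\det(\Gamma),
\end{align*}
so it suffices to show that the subtracted term is strictly negative, i.e. $\log\det(\Gamma)<\log\det(\Sigma_{XX})$, equivalently $\det(\Gamma)<\det(\Sigma_{XX})$; chaining the two bounds then yields $\tfrac12\fF^{-1}_{\chi^2_l}(1-\delta)<\varepsilon$, which is exactly~\eqref{asm:nece}.

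The one genuine step is therefore the strict matrix inequality $\Gamma\prec\Sigma_{XX}$, where $\Gamma=\Sigma_{XX}-\Sigma_{XY}\Sigma_{YY}^{-1}\Sigma_{XY}^\top$ is the Schur complement of $\Sigma_{YY}$ in $\Sigma$. First I would recall that since $\Sigma\succ0$ (Assumption~\ref{asm:XY}), the Schur complement satisfies $\Gamma\succ0$, and moreover $\Sigma_{XX}-\Gamma=\Sigma_{XY}\Sigma_{YY}^{-1}\Sigma_{XY}^\top\succeq0$. The gap is exactly this rank-$l$ positive semidefinite matrix: because $\rank(\Sigma_{XY})=l>0$ and $\Sigma_{YY}^{-1}\succ0$, the matrix $\Sigma_{XY}\Sigma_{YY}^{-1}\Sigma_{XY}^\top$ is positive semidefinite of rank exactly $l\ge1$, hence nonzero. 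Therefore $\Sigma_{XX}=\Gamma+\Sigma_{XY}\Sigma_{YY}^{-1}\Sigma_{XY}^\top$ with the second summand a nonzero PSD matrix, so $\Sigma_{XX}\succeq\Gamma$ and $\Sigma_{XX}\neq\Gamma$; since $\Gamma\succ0$, applying strict monotonicity of the determinant on the positive definite cone (equivalently, writing $\Sigma_{XX}=\Gamma^{1/2}(I+\Gamma^{-1/2}\Sigma_{XY}\Sigma_{YY}^{-1}\Sigma_{XY}^\top\Gamma^{-1/2})\Gamma^{1/2}$ and noting the bracketed matrix is $I$ plus a nonzero PSD matrix, so has a determinant strictly greater than $1$) gives $\det(\Sigma_{XX})>\det(\Gamma)$, hence $\log\det(\Sigma_{XX})-\log\det(\Gamma)>0$.

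Putting the pieces together: from Theorem~\ref{thm:privacy_con} and $\log\det(\Sigma_{XX})-\log\det(\Gamma)>0$ we obtain
\begin{align*}
\tfrac12\,\fF^{-1}_{\chi^2_l}(1-\delta)\;\le\;\varepsilon-\bigl(\log\det(\Sigma_{XX})-\log\det(\Gamma)\bigr)\;<\;\varepsilon,
\end{align*}
which is~\eqref{asm:nece}. I do not expect any real obstacle here: the result is essentially a corollary in the strict sense, and the only content is the elementary linear-algebra fact that adding a nonzero positive semidefinite perturbation to a positive definite matrix strictly increases its determinant. The mild care needed is just to use $l>0$ (so the perturbation is genuinely nonzero) and $\Sigma\succ0$ (so $\Gamma$ is positive definite and the Schur complement identity applies); both are guaranteed by Assumption~\ref{asm:XY}.
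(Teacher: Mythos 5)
Your proposal is correct and matches the paper's (implicit) argument exactly: the corollary is obtained by combining the characterization in Theorem~\ref{thm:privacy_con} with the strict inequality $\log\det(\Sigma_{XX})-\log\det(\Gamma)>0$, which the paper asserts in the surrounding discussion and which you justify properly via the fact that $\Sigma_{XX}-\Gamma=\Sigma_{XY}\Sigma_{YY}^{-1}\Sigma_{XY}^{\top}$ is a nonzero positive semidefinite matrix of rank $l\ge 1$. Your filled-in determinant-monotonicity step is the only content the paper leaves unstated, and it is handled correctly.
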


Although Theorem~\ref{thm:privacy_con} provides a necessary and sufficient conditions for \((\varepsilon, \delta)\)-PML privacy, it does not offer a systematic method for designing noise to achieve \((\varepsilon, \delta)\)-PML privacy. Therefore, in the next subsection, we develop such mechanisms.

\subsection{Gaussian Mechanisms for Privacy Preservation}
In this subsection, we consider designing Gaussian noise to ensure \((\varepsilon, \delta)\)-PML privacy. Specifically, we develop an LMI-based approach to design the following Gaussian mechanism.

\begin{secdefn}
Let $X \in \cB (\Omega,\bR^n)$ and $Z \in \cB (\Omega,\bR^m)$ follow a joint non-degenerate Gaussian distribution $\cN (\mu, \Sigma)$ with
    \begin{align*}
        \mu = \begin{bmatrix}
            \mu_X \\ \mu_Z
        \end{bmatrix} \quad \textrm{and}\quad
        \Sigma = \begin{bmatrix}
            \Sigma_{XX} & \Sigma_{XZ} \\ \Sigma_{XZ}^\top & \Sigma_{ZZ}
        \end{bmatrix} \succ 0,
    \end{align*} 
    where $\rank(\Sigma_{XZ}) = l > 0$.
    Also, let $V \in \cB (\Omega,\bR^m)$ follow a Gaussian distribution $\cN(0, \Theta)$.
     Then, the masked output
    \begin{align}
        \label{eq:Gaussian_mech}
        Y = Z + V
    \end{align}
    is called a \emph{Gaussian mechanism}. Furthermore, the Gaussian mechanism is said to be $(\varepsilon, \delta)$-PML private if there exist $\varepsilon \geq 0$ and $\delta \in (0, 1)$ such that~\eqref{def:PML} and~\eqref{asm:nece} hold. \red
\end{secdefn}

As discussed previously, \eqref{asm:nece} is a necessary condition for \((\varepsilon, \delta)\)-PML privacy, which is easy to verify. Building upon this necessary condition, we next aim to derive a sufficient condition that ensures $(\varepsilon, \delta)$-PML privacy.

To simplify the presentation of the forthcoming result, we introduce the following notation:
\begin{align}
    \label{eq:kappa}
    \kappa_{\ell}(\varepsilon, \delta) := \exp \left(\frac{\frac{1}{2} \fF^{-1}_{\chi^2_l}(1 - \delta) - \varepsilon}{n}\right).
\end{align}
Using this definition, we now state the third main result of this section.
\begin{secthm}
    \label{thm:pml_g}
    A Gaussian mechanism~\eqref{eq:Gaussian_mech} is $(\varepsilon, \delta)$-PML private if $\Theta$ is designed such that
    \begin{align}
    \label{eq:lmi_con}
        \begin{bmatrix}
            \left(1-\kappa_{\ell}(\varepsilon, \delta)\right) \Sigma_{XX} & \Sigma_{XZ} \\ \Sigma_{XZ}^\top  & \Theta + \Sigma_{ZZ}
        \end{bmatrix} \succeq 0.
    \end{align}
\end{secthm}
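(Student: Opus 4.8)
The plan is to reduce the matrix inequality \eqref{eq:lmi_con} to the scalar condition \eqref{eq:privacy_con} of Theorem~\ref{thm:privacy_con} together with the necessary condition \eqref{asm:nece}, and then invoke Theorem~\ref{thm:privacy_con}. First I would identify the joint second-order statistics of $(X,Y)$. Since the privatizing noise $V\sim\cN(0,\Theta)$ is independent of $(X,Z)$, the pair $(X,Y)$ with $Y=Z+V$ is jointly Gaussian with $\Sigma_{XY}=\Sigma_{XZ}$ (hence $\rank(\Sigma_{XY})=l>0$) and $\Sigma_{YY}=\Sigma_{ZZ}+\Theta$. Moreover $\bigl[\begin{smallmatrix}\Sigma_{XX}&\Sigma_{XZ}\\\Sigma_{XZ}^\top&\Sigma_{ZZ}+\Theta\end{smallmatrix}\bigr] = \bigl[\begin{smallmatrix}\Sigma_{XX}&\Sigma_{XZ}\\\Sigma_{XZ}^\top&\Sigma_{ZZ}\end{smallmatrix}\bigr] + \bigl[\begin{smallmatrix}0&0\\0&\Theta\end{smallmatrix}\bigr] \succ 0$, so $(X,Y)$ satisfies Assumption~\ref{asm:XY} and Theorem~\ref{thm:privacy_con} applies with $\Gamma=\Sigma_{XX}-\Sigma_{XZ}(\Sigma_{ZZ}+\Theta)^{-1}\Sigma_{XZ}^\top$.

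Next I would rewrite the LMI. Because $\Theta+\Sigma_{ZZ}\succ0$, the Schur complement of \eqref{eq:lmi_con} with respect to its lower-right block reads $(1-\kappa_{\ell}(\varepsilon,\delta))\Sigma_{XX}\succeq\Sigma_{XZ}(\Theta+\Sigma_{ZZ})^{-1}\Sigma_{XZ}^\top$, i.e. $\Gamma\succeq\kappa_{\ell}(\varepsilon,\delta)\Sigma_{XX}$. Conjugating by $\Sigma_{XX}^{-1/2}$ shows that every eigenvalue of $\Sigma_{XX}^{-1/2}\Gamma\Sigma_{XX}^{-1/2}$ is at least $\kappa_{\ell}(\varepsilon,\delta)$, hence $\det(\Gamma)/\det(\Sigma_{XX})=\det(\Sigma_{XX}^{-1/2}\Gamma\Sigma_{XX}^{-1/2})\ge\kappa_{\ell}(\varepsilon,\delta)^{n}$. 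Taking logarithms and substituting $n\log\kappa_{\ell}(\varepsilon,\delta)=\tfrac12\fF^{-1}_{\chi^2_l}(1-\delta)-\varepsilon$ from \eqref{eq:kappa} yields precisely \eqref{eq:privacy_con}, so \eqref{def:PML} holds by Theorem~\ref{thm:privacy_con}. For \eqref{asm:nece}, note that $\rank(\Sigma_{XZ})=l>0$ makes $\Sigma_{XZ}(\Theta+\Sigma_{ZZ})^{-1}\Sigma_{XZ}^\top$ a nonzero positive semidefinite matrix, so $\Gamma\preceq\Sigma_{XX}$ with $\Gamma\neq\Sigma_{XX}$ and therefore $\det(\Gamma)<\det(\Sigma_{XX})$; combined with the bound above this forces $\kappa_{\ell}(\varepsilon,\delta)^{n}<1$, equivalently $\tfrac12\fF^{-1}_{\chi^2_l}(1-\delta)<\varepsilon$. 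Hence both \eqref{def:PML} and \eqref{asm:nece} hold, and the mechanism is $(\varepsilon,\delta)$-PML private.

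There is no deep obstacle here; the argument is essentially a direct computation layered on Theorem~\ref{thm:privacy_con}. The points requiring mild care are (i) justifying the reduction to Assumption~\ref{asm:XY}, namely the independence of $V$ and the positive-definiteness of $\Theta+\Sigma_{ZZ}$ needed for the Schur complement, and (ii) the determinant monotonicity fact that $0\prec A\preceq B$ implies $\det A\le\det B$ with equality iff $A=B$. I expect the LMI to yield only a sufficient condition, because estimating $\det(\Sigma_{XX}^{-1/2}\Gamma\Sigma_{XX}^{-1/2})$ eigenvalue-by-eigenvalue is conservative — which is consistent with the theorem being stated as an ``if'' rather than ``if and only if''.
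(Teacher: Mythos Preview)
Your proposal is correct and follows essentially the same route as the paper's proof: identify $\Sigma_{XY}=\Sigma_{XZ}$ and $\Sigma_{YY}=\Sigma_{ZZ}+\Theta$, apply the Schur complement to \eqref{eq:lmi_con}, conjugate by $\Sigma_{XX}^{-1/2}$ to obtain $\Sigma_{XX}^{-1/2}\Gamma\Sigma_{XX}^{-1/2}\succeq\kappa_{\ell}(\varepsilon,\delta)I_n$, pass to determinants, and invoke Theorem~\ref{thm:privacy_con}. The only difference is that you additionally derive \eqref{asm:nece} from the LMI itself via $\det(\Gamma)<\det(\Sigma_{XX})$, whereas the paper treats \eqref{asm:nece} as a standing assumption justified in the surrounding text; this is a minor refinement rather than a different approach.
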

\begin{proof}
    See Appendix~\ref{app:pml_g}.
\end{proof}

In fact, the necessary condition \eqref{asm:nece} is critical, since it guarantees that the top-left block in \eqref{eq:lmi_con} is positive definite. From Theorem~\ref{thm:pml_g}, one can directly compute the covariance matrix $\Theta$ for a Gaussian mechanism \eqref{eq:Gaussian_mech} via the LMI formulation, enhancing its practical applicability. Applying the Schur Complement~\cite[Theorem 7.7.6]{horn2012matrix} in view of ~\eqref{asm:nece}, one can check that \eqref{eq:lmi_con} is equivalent to
\begin{align}\label{eq2:lmi_con}
\Theta + \Sigma_{ZZ} - \frac{1}{1-\kappa_{\ell}(\varepsilon, \delta)} \Sigma_{XZ}^{\top} \Sigma_{XX}^{-1} \Sigma_{XZ} \succeq 0.
\end{align}
Thus, selecting $\Theta$ sufficiently large always ensures \eqref{eq:lmi_con}.
Here, we want to underscore the consequent observation that it is \emph{always} feasible to design an $(\varepsilon, \delta)$-PML private Gaussian mechanism.

An important special case of the Gaussian mechanism~\eqref{eq:Gaussian_mech} arises when \( Z \) is a linear function of \( X \), i.e., \( Z = C X \) with \( C \in \bR^{m \times n} \). Accordingly, we consider the following Gaussian mechanism:
\begin{align}
\label{eq:Gaussian}
    Y = C X + V.
\end{align}
For clarity, we call this mechanism  the \emph{linear Gaussian mechanism}. 
The linear Gaussian mechanism naturally appears in a variety of privacy problems, including database algorithm design~\cite{balle2018improving} and initial state privacy of linear dynamical systems~\cite{wang2023differential, Yu2020}. In fact, differential measures such as DP are employed in these literature.

As a corollary of Theorem~\ref{thm:pml_g}, we derive the following LMI condition for $(\varepsilon, \delta)$-PML privacy of the linear Gaussian mechanism.

\begin{seccor}
\label{prop:linear}
    A linear Gaussian mechanism \eqref{eq:Gaussian} is $(\varepsilon, \delta)$-PML private if $\Theta$ is designed such that 
    \begin{align}
        \label{eq:lmi_con_linear}
       \Theta \succeq \frac{\kappa_{\ell}(\varepsilon, \delta)}{1 - \kappa_{\ell}(\varepsilon, \delta)}C\Sigma_{XX}C^{\top}.
    \end{align}
    
\end{seccor}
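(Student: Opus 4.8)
The plan is to specialize Theorem~\ref{thm:pml_g} to the case $Z = CX$. First I would compute the covariance blocks of the joint distribution of $(X, Z)$ when $Z = CX$: clearly $\Sigma_{XZ} = \Sigma_{XX} C^\top$ and $\Sigma_{ZZ} = C \Sigma_{XX} C^\top$. (One should note that this joint distribution is degenerate whenever $m > n$ or $C$ is not full row rank, so strictly the hypotheses of the Gaussian-mechanism definition are met only in a limiting/generalized sense; I would either remark that the LMI derivation only uses the algebraic identity~\eqref{eq2:lmi_con} and not non-degeneracy of the $(X,Z)$ pair, or invoke~\eqref{eq2:lmi_con} directly as it is stated to be equivalent to~\eqref{eq:lmi_con} under~\eqref{asm:nece}.)

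Next I would substitute these blocks into the equivalent Schur-complement form~\eqref{eq2:lmi_con}. Writing $\kappa := \kappa_\ell(\varepsilon,\delta)$ for brevity, the middle term becomes
\begin{align*}
\frac{1}{1-\kappa}\,\Sigma_{XZ}^\top \Sigma_{XX}^{-1}\Sigma_{XZ}
= \frac{1}{1-\kappa}\, C\Sigma_{XX}\,\Sigma_{XX}^{-1}\,\Sigma_{XX}C^\top
= \frac{1}{1-\kappa}\, C\Sigma_{XX}C^\top,
\end{align*}
so~\eqref{eq2:lmi_con} reads
\begin{align*}
\Theta + C\Sigma_{XX}C^\top - \frac{1}{1-\kappa}\, C\Sigma_{XX}C^\top \succeq 0,
\end{align*}
i.e.
\begin{align*}
\Theta \succeq \Bigl(\frac{1}{1-\kappa} - 1\Bigr) C\Sigma_{XX}C^\top = \frac{\kappa}{1-\kappa}\, C\Sigma_{XX}C^\top,
\end{align*}
which is exactly~\eqref{eq:lmi_con_linear}. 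Since Theorem~\ref{thm:pml_g} guarantees that~\eqref{eq:lmi_con} (equivalently~\eqref{eq2:lmi_con}) implies $(\varepsilon,\delta)$-PML privacy, the corollary follows.

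The only mild subtlety — and the one place I would be careful — is the sign of $1-\kappa$. The necessary condition~\eqref{asm:nece}, namely $\tfrac12 \fF^{-1}_{\chi^2_l}(1-\delta) < \varepsilon$, together with the definition~\eqref{eq:kappa} of $\kappa_\ell(\varepsilon,\delta) = \exp\bigl((\tfrac12\fF^{-1}_{\chi^2_l}(1-\delta) - \varepsilon)/n\bigr)$, gives $\kappa_\ell(\varepsilon,\delta) \in (0,1)$, so $1-\kappa > 0$ and the division and the equivalence between~\eqref{eq:lmi_con} and~\eqref{eq2:lmi_con} are valid. Once that is in place, everything is a direct substitution, so I do not anticipate any real obstacle; the result is a genuine corollary rather than a theorem requiring fresh ideas.
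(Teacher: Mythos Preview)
Your proposal is correct and follows exactly the paper's own argument: compute $\Sigma_{XZ}=\Sigma_{XX}C^\top$ and $\Sigma_{ZZ}=C\Sigma_{XX}C^\top$ for $Z=CX$, substitute into the Schur-complement form~\eqref{eq2:lmi_con}, and simplify. Your additional remarks on $\kappa\in(0,1)$ and the potential degeneracy of the $(X,Z)$ pair are careful observations that the paper leaves implicit.
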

\begin{proof}
From $Z = C X$, we have $\Sigma_{XZ} = \Sigma_{XX} C^T$ and $\Sigma_{ZZ} = C \Sigma_{XX} C^T$. Substituting this into~\eqref{eq2:lmi_con} gives~\eqref{eq:lmi_con_linear}.
\end{proof}

Given fixed \(\varepsilon \ge 0\) and \(\delta \in (0, 1)\), $\kappa_{\ell}(\varepsilon,\delta)/(1 - \kappa_{\ell}(\varepsilon, \delta))$ increases as the parameter \(l \in \bZ_+\setminus\{0\}\) grows, requiring larger $\Theta$. Note that $\ell = {\rm rank} (C)$ for $Z = C X$. Therefore, if $Z = C X$ contains more information about $X$ (e.g., $\ell' > \ell$), we need to add larger noise to guarantee the same privacy performance for $\kappa_{\ell'}(\varepsilon,\delta)$.

From the necessary condition~\eqref{asm:nece}, we have \(\kappa_{\ell}(\varepsilon, \delta) \in (0,1)\). Thus, if the covariance \(\Sigma_{XX}\) is large, we need to design large \(\Theta\). This implies that when prior information is limited (i.e., \(\Sigma_{XX}\) is large), large noise is required to be added for achieving the same level of privacy protection.


\subsection{Relationships with Other Privacy Notions}
In this paper, we employ $(\varepsilon, \delta)$-PML privacy as a privacy criterion of a noise-adding mechanism. Other well-known criteria are DP~\cite{balle2018improving} and MI privacy~\cite{wang2016relation}. In this subsection, we study the relations of $(\varepsilon, \delta)$-PML privacy with respect to these two privacy criteria.

First, we investigate the relation between $(\varepsilon, \delta)$-PML privacy and DP defined below.
\begin{secdefn} \label{def:dp_privacy}
(Differential Privacy \cite{balle2018improving})
Given $\zeta>0$, let $\operatorname{Adj}^\zeta$ denote the set of all pairs of initial states $(x, x') \in \bR^{n} \times \bR^{n} $ satisfying $ | x - x'|_{2} \leq \zeta$. Then, the mechanism 
\begin{align}\label{eq:dp_mech}
 M(x) := C x + V, \quad  V \in \cB (\Omega,\bR^m)
\end{align}
is said to be \emph{$(\varepsilon_{\rm DP}, \delta_{\rm DP})$-differentially private (DP)} for $\operatorname{Adj}^\zeta$ if there exist $\varepsilon_{\rm DP}, \delta_{\rm DP} \geq 0$  such that 
\begin{align}
\bP \left(M(x) \in \cS\right) \leq \mathrm{e}^{\varepsilon_{\rm DP}} \bP \left(M(x') \in \cS\right)+\delta_{\rm DP}, 
\quad \forall \cS \in \cB(\bR^m)
\end{align}
for any $(x, x') \in \mathrm{Adj}^\zeta$.
\red
\end{secdefn}

The following theorem characterizes the relationship between PML privacy and DP. If a linear Gaussian mechanism~\eqref{eq:Gaussian} is \((\varepsilon, \delta)\)-PML private, the corresponding mechanism~\eqref{eq:dp_mech} is $(\varepsilon_{\rm DP}, \delta_{\rm DP})$-differentially private for a suitable pair $(\varepsilon_{\rm DP}, \delta_{\rm DP})$, and vice versa.

\begin{secthm}
\label{thm:pmldp}
Consider a linear Gaussian mechanism~\eqref{eq:Gaussian} with the $\operatorname{rank}(C) = l$, and the corresponding mechanism~\eqref{eq:dp_mech}. Also, recall $\kappa_{\ell}(\varepsilon, \delta)$ in \eqref{eq:kappa}. Then, the following two hold:
\begin{enumerate}
\item If \eqref{eq:Gaussian} is $(\varepsilon, \delta)$-PML private, then~\eqref{eq:dp_mech} is $(\varepsilon_{\rm DP}, \delta_{\rm DP})$-differentially private for $\operatorname{Adj}^\zeta$ for any $\varepsilon_{\rm DP} > 0$, $\delta_{\rm DP} \in (0, 1)$ and $\zeta > 0$ satisfying
\begin{align}
    \label{eq:pmltodp}
\frac{1}{ (\kappa_{\ell}(\varepsilon, \delta))^n \lambda_{\min} (\Sigma_{XX})}
\leq \left(\frac{\phi^{-1}_{\varepsilon_{\rm DP}}(\delta_{\rm DP})} {\zeta}\right)^2,
\end{align}

    where $\phi_{\varepsilon_{\rm DP}}^{-1}(\delta_{\rm DP})$ denotes the inverse function of $\phi$ with respect to $\delta_{\rm DP}$ for fixed $\varepsilon_{\rm DP}$, defined by
\begin{align*}
\phi(\varepsilon_{\rm DP}, \delta_{\rm DP}) &:= \varphi\left(\frac{\delta_{\rm DP}}{2} - \frac{\varepsilon_{\rm DP}}{\delta_{\rm DP}}\right) \\
&\qquad - e^{\varepsilon_{\rm DP}} \varphi \left(-\frac{\delta_{\rm DP}}{2} - \frac{\varepsilon_{\rm DP}}{\delta_{\rm DP}}\right)\\
\varphi (w) &:= \frac{1}{\sqrt{2\pi}} \int_{-\infty}^w e^{-\frac{v^2}{2}} \, dv;
\end{align*}
\item If~\eqref{eq:dp_mech} is $(\varepsilon_{\rm DP}, \delta_{\rm DP})$-differentially private for $\operatorname{Adj}^\zeta$, then \eqref{eq:Gaussian} is $(\varepsilon, \delta)$-PML private for any $\varepsilon \ge 0$ and $\delta \in (0, 1)$ such that~\eqref{asm:nece} and 
\begin{align}
    \label{eq:dptopml}
            &n \log \kappa_{\ell}(\varepsilon, \delta) \nonumber\\
            &\le  - \log \det \left(I_n +  \left(\frac{\phi^{-1}_{\varepsilon_{\rm DP}}(\delta_{\rm DP})}{\zeta}\right)^2 \Sigma_{XX} \right) 
    \end{align}
    hold.
\end{enumerate}
\end{secthm}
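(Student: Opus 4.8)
The plan is to establish the two directions of Theorem~\ref{thm:pmldp} by routing both through a common analytic handle: the Gaussian DP analysis of \cite{balle2018improving} expressed via the function $\phi$, and the closed-form PML/LMI condition of Corollary~\ref{prop:linear}. For the linear Gaussian mechanism~\eqref{eq:Gaussian} and the corresponding DP mechanism~\eqref{eq:dp_mech}, the key quantity is the noise covariance $\Theta$ of $V\sim\cN(0,\Theta)$: PML privacy (via Corollary~\ref{prop:linear}) is governed by the lower bound $\Theta \succeq \tfrac{\kappa_{\ell}}{1-\kappa_{\ell}} C\Sigma_{XX}C^{\top}$, while $(\varepsilon_{\rm DP},\delta_{\rm DP})$-DP for $\operatorname{Adj}^\zeta$ is, by the standard analytic-Gaussian-mechanism calibration, equivalent to a condition of the form $\Theta \succeq (\zeta/\phi^{-1}_{\varepsilon_{\rm DP}}(\delta_{\rm DP}))^{2}\,\text{(something)}$ controlling the worst-case sensitivity $\sup_{|x-x'|\le\zeta}|C(x-x')|_{\Theta^{-1}}$. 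So the whole proof is essentially bookkeeping that matches these two $\Theta$-lower-bounds against each other, using $\lambda_{\min}$ and $\det$ to pass between the quadratic forms.

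First I would record the DP characterization precisely. By \cite{balle2018improving}, $M(x)=Cx+V$ with $V\sim\cN(0,\Theta)$ is $(\varepsilon_{\rm DP},\delta_{\rm DP})$-DP for $\operatorname{Adj}^\zeta$ if and only if $\phi\big(\varepsilon_{\rm DP}, s\big)\le \delta_{\rm DP}$ where $s$ is the $\ell_2$-sensitivity measured in the $\Theta^{-1}$ norm, i.e. $s = \sup_{|x-x'|_2\le\zeta} |C(x-x')|_{\Theta^{-1}}$; monotonicity of $\phi$ in its second argument lets me rewrite this as $s \le \phi^{-1}_{\varepsilon_{\rm DP}}(\delta_{\rm DP})$. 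Bounding $s^2 = \sup_{|v|_2\le\zeta} v^\top C^\top\Theta^{-1}Cv = \zeta^2\,\lambda_{\max}(C^\top\Theta^{-1}C)$, DP for $\operatorname{Adj}^\zeta$ becomes exactly $\lambda_{\max}(C^\top\Theta^{-1}C)\le (\phi^{-1}_{\varepsilon_{\rm DP}}(\delta_{\rm DP})/\zeta)^2$, equivalently $C^\top\Theta^{-1}C \preceq (\phi^{-1}_{\varepsilon_{\rm DP}}(\delta_{\rm DP})/\zeta)^2 I_n$.

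For direction (i): assume \eqref{eq:Gaussian} is $(\varepsilon,\delta)$-PML private. By Corollary~\ref{prop:linear} (and the remark that one can always take the minimal such $\Theta$, or just use whatever $\Theta$ realizes the privacy) we have $\Theta \succeq \tfrac{\kappa_{\ell}}{1-\kappa_{\ell}}C\Sigma_{XX}C^{\top} \succeq \tfrac{\kappa_{\ell}}{1-\kappa_{\ell}}\lambda_{\min}(\Sigma_{XX})\,CC^\top$, hence $\Theta^{-1}\preceq \tfrac{1-\kappa_{\ell}}{\kappa_{\ell}\lambda_{\min}(\Sigma_{XX})}(CC^\top)^{\dagger}$ on the range of $C$, giving $\lambda_{\max}(C^\top\Theta^{-1}C)\le \tfrac{1-\kappa_{\ell}}{\kappa_{\ell}\lambda_{\min}(\Sigma_{XX})}$. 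Wait—this produces $\kappa_\ell$, not $\kappa_\ell^n$; the appearance of $(\kappa_{\ell}(\varepsilon,\delta))^n$ in \eqref{eq:pmltodp} signals that the PML constraint actually used is the determinant form from Theorem~\ref{thm:privacy_con} / the proof of Theorem~\ref{thm:pml_g}, namely $\det(\Gamma)\ge \kappa_\ell^n\det(\Sigma_{XX})$ combined with $\Gamma = \Sigma_{XX}-\Sigma_{XY}\Sigma_{YY}^{-1}\Sigma_{XY}^\top$, which for $Y=CX+V$ gives $\Gamma^{-1} = \Sigma_{XX}^{-1}+C^\top\Theta^{-1}C$. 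So I would instead start from $\det(\Sigma_{XX}^{-1}+C^\top\Theta^{-1}C)\le \kappa_\ell^{-n}\det(\Sigma_{XX}^{-1})$, i.e. $\det(I_n+\Sigma_{XX}C^\top\Theta^{-1}C)\le\kappa_\ell^{-n}$, and then use $\det(I+A)\ge 1+\lambda_{\max}(A)$ for $A\succeq0$ to extract $\lambda_{\max}(\Sigma_{XX}^{1/2}C^\top\Theta^{-1}C\Sigma_{XX}^{1/2})\le\kappa_\ell^{-n}-1$, whence $\lambda_{\max}(C^\top\Theta^{-1}C)\le (\kappa_\ell^{-n}-1)/\lambda_{\min}(\Sigma_{XX}) \le \kappa_\ell^{-n}/\lambda_{\min}(\Sigma_{XX})$, which is precisely the left side of \eqref{eq:pmltodp}; matching against the DP characterization of the previous paragraph closes (i). For direction (ii): DP gives $C^\top\Theta^{-1}C\preceq (\phi^{-1}_{\varepsilon_{\rm DP}}(\delta_{\rm DP})/\zeta)^2 I_n$, so $\Sigma_{XX}^{1/2}C^\top\Theta^{-1}C\Sigma_{XX}^{1/2}\preceq (\phi^{-1}_{\varepsilon_{\rm DP}}(\delta_{\rm DP})/\zeta)^2\Sigma_{XX}$, and therefore $\det(I_n+\Sigma_{XX}C^\top\Theta^{-1}C) = \det\big(I_n+\Sigma_{XX}^{1/2}C^\top\Theta^{-1}C\Sigma_{XX}^{1/2}\big)\le \det\big(I_n+(\phi^{-1}_{\varepsilon_{\rm DP}}(\delta_{\rm DP})/\zeta)^2\Sigma_{XX}\big)$; taking logs and recalling that $(\varepsilon,\delta)$-PML for the linear mechanism is equivalent (via Theorem~\ref{thm:privacy_con}) to $-\log\det(I_n+\Sigma_{XX}C^\top\Theta^{-1}C)\le n\log\kappa_\ell$ plus the necessary condition \eqref{asm:nece}, we read off exactly \eqref{eq:dptopml}.

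The main obstacle I anticipate is not any single estimate but getting the exact constants and the direction of every inequality consistent across the two notions — in particular (a) correctly identifying which PML characterization (the scalar $\kappa_\ell$ form of Corollary~\ref{prop:linear} versus the determinant form behind Theorem~\ref{thm:pml_g}) yields the stated $(\kappa_{\ell})^n$ dependence, (b) justifying the reduction of the DP condition to a single eigenvalue/PSD inequality, which relies on the analytic-Gaussian-mechanism result of \cite{balle2018improving} and on $\Theta\succ0$ so that $C^\top\Theta^{-1}C$ makes sense and the sensitivity supremum is attained, and (c) handling the rank-$l$ degeneracy of $C$ cleanly (the nonzero eigenvalues live on an $l$-dimensional subspace, and $\det(I_n+\Sigma_{XX}C^\top\Theta^{-1}C)$ only involves those). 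I would also double-check the tacit monotonicity/invertibility of $\phi_{\varepsilon_{\rm DP}}(\cdot)$ in $\delta_{\rm DP}$ needed for $\phi^{-1}_{\varepsilon_{\rm DP}}$ to be well defined, citing \cite{balle2018improving}, and note that \eqref{asm:nece} is what guarantees $\kappa_\ell\in(0,1)$ so all the logarithms and the factor $\kappa_\ell/(1-\kappa_\ell)$ are well-behaved.
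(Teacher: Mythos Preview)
Your proposal is correct and follows essentially the same route as the paper: both directions pass through the DP characterization $\lambda_{\max}(C^\top\Theta^{-1}C)\le(\phi^{-1}_{\varepsilon_{\rm DP}}(\delta_{\rm DP})/\zeta)^2$ (stated in the paper as a separate lemma, derived from \cite{balle2018improving}) and the necessary-and-sufficient determinant form of PML privacy from Theorem~\ref{thm:privacy_con}, rewritten via $\Gamma^{-1}=\Sigma_{XX}^{-1}+C^\top\Theta^{-1}C$ as $\det(\Sigma_{XX}^{-1}+C^\top\Theta^{-1}C)\le\kappa_\ell^{-n}\det(\Sigma_{XX}^{-1})$. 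Your part~(ii) is identical to the paper's.

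The only genuine difference is the eigenvalue estimate in part~(i). The paper lower-bounds $\det(\Sigma_{XX}^{-1}+C^\top\Theta^{-1}C)$ by factoring out all but the largest eigenvalue and applying Weyl's inequality, obtaining
\[
\lambda_{\max}(C^\top\Theta^{-1}C)\le \frac{1}{(\kappa_\ell)^n\lambda_{\min}(\Sigma_{XX})}-\frac{1}{\lambda_{\max}(\Sigma_{XX})},
\]
and then drops the last term. Your route---$\det(I+A)\ge 1+\lambda_{\max}(A)$ applied to $A=\Sigma_{XX}^{1/2}C^\top\Theta^{-1}C\Sigma_{XX}^{1/2}$, followed by $\lambda_{\max}(C^\top\Theta^{-1}C)\le\lambda_{\max}(A)/\lambda_{\min}(\Sigma_{XX})$---is slightly more elementary and lands directly at the stated bound \eqref{eq:pmltodp}; the paper's intermediate bound is marginally sharper but is discarded anyway. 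Your initial detour through Corollary~\ref{prop:linear} was indeed a dead end (that LMI is only sufficient for PML, not necessary), and you correctly recovered by switching to the determinant characterization.
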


\begin{proof}
    See Appendix~\ref{app:pmldp}.
\end{proof}

    In Theorem~\ref{thm:pmldp}, the prior covariance $\Sigma_{XX}$ plays a key role in connecting $(\varepsilon,\delta)$-PML privacy with $(\varepsilon_{\rm DP},\delta_{\rm DP})$-differential privacy. The function $\phi^{-1}_{\varepsilon_{\rm DP}}(\delta_{\rm DP})$ increases in both $\delta_{\rm DP}$ and $\varepsilon_{\rm DP}$ \cite{balle2018improving}. Thus, for a fixed $(\varepsilon, \delta)$-PML privacy level, \eqref{eq:pmltodp} indicates that a larger $\Sigma_{XX}$ yields smaller $(\varepsilon_{\rm DP}, \delta_{\rm DP})$, implying stronger differential privacy. This agrees with our earlier observation after Corollary~\ref{prop:linear} that a larger $\Sigma_{XX}$ requires injecting larger noise to achieve the same $(\varepsilon, \delta)$-PML privacy level. In general, injecting larger noise increases privacy levels, regardless of the criterion used.  
    
    Next, from~\eqref{eq:kappa}, $\kappa_{\ell}(\varepsilon, \delta)$
     decreases in both $\delta$ and $\varepsilon$. Thus, for a fixed $(\varepsilon_{\rm DP}, \delta_{\rm DP})$-DP level, \eqref{eq:dptopml} shows that a larger $\Sigma_{XX}$ results in a weaker $(\varepsilon, \delta)$-PML privacy guarantee. Since $(\varepsilon_{\rm DP}, \delta_{\rm DP})$ and $\Sigma_{XX}$ are  independent, when $\Sigma_{XX}$ becomes sufficiently large, DP becomes less effective to evaluate privacy performance with prior information. This illustrates the importance of $(\varepsilon, \delta)$-PML privacy when prior information is available.

Next, we study the relation between $(\varepsilon, \delta)$-PML privacy and MI privacy as defined below.
\begin{secdefn}
     (Mutual-Information Privacy \cite{wang2016relation}) The linear Gaussian mechanism~\eqref{eq:Gaussian} is $\varepsilon_{\rm MI}$-mutual-information (MI) private  if there exists  $\varepsilon_{\rm MI} > 0$ such that
\begin{align}\label{eq:mip}
    I(X; Y)  \leq \varepsilon_{\rm MI},
\end{align}
where $I(X;Y)$ is the MI between $X$ and $Y$.~\red
\end{secdefn}

The next theorem states the relationship between PML privacy and MI privacy. 

\begin{secthm}
\label{thm:pmlmi}
Consider a linear Gaussian mechanism~\eqref{eq:Gaussian} with $\operatorname{rank}(C) = l$. Also, recall $\kappa_{\ell}(\varepsilon, \delta)$ in \eqref{eq:kappa}. Then, the following two hold:

\begin{enumerate}
\item If \eqref{eq:Gaussian} is $(\varepsilon, \delta)$-PML private, then it is $\varepsilon_{\rm MI}$-MI private for any $\varepsilon_{\rm MI} > 0$ satisfying 
\begin{align}
\label{eq:mi2}
 n \log (\kappa_{\ell}(\varepsilon,\delta)) \ge -2 \varepsilon_{\rm MI}  ;
\end{align}
\item If \eqref{eq:Gaussian} is $\varepsilon_{\rm MI}$-MI private, then it is $(\varepsilon, \delta)$-PML private for any $\varepsilon \ge 0$ and $\delta \in [0, 1]$ such that~\eqref{asm:nece} and 
    \begin{align}
    \label{eq:mi}
         n \log(\kappa_{\ell}(\varepsilon, \delta)) \le - 2\varepsilon_{\rm MI}
    \end{align}
    hold.
\end{enumerate}
\end{secthm}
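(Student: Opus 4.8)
The plan is to reduce \emph{both} implications to the exact characterization of $(\varepsilon,\delta)$-PML privacy in Theorem~\ref{thm:privacy_con}, combined with the classical closed-form expression for the mutual information of a jointly Gaussian pair. For the linear Gaussian mechanism~\eqref{eq:Gaussian} one has $Z=CX$, so $(X,Y)$ is jointly Gaussian with $\Sigma_{XY}=\Sigma_{XX}C^\top$ and $\Sigma_{YY}=C\Sigma_{XX}C^\top+\Theta$; since $\Sigma_{XX}\succ 0$ we get $\rank(\Sigma_{XY})=\rank(C)=l$, so Assumption~\ref{asm:XY} is in force (non-degeneracy additionally requires $\Theta\succ 0$, which I would flag explicitly). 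Using the Schur-complement factorization $\det(\Sigma)=\det(\Sigma_{YY})\,\det(\Gamma)$ with $\Gamma$ as in~\eqref{eq21:jg_pml}, the Gaussian mutual information becomes
\begin{align*}
    I(X;Y)=\frac{1}{2}\log\frac{\det(\Sigma_{XX})\det(\Sigma_{YY})}{\det(\Sigma)}=\frac{1}{2}\big(\log\det(\Sigma_{XX})-\log\det(\Gamma)\big).
\end{align*}

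Next I would translate the PML criterion into the same quantities. From~\eqref{eq:kappa} we have $n\log\kappa_{\ell}(\varepsilon,\delta)=\tfrac{1}{2}\fF^{-1}_{\chi^2_l}(1-\delta)-\varepsilon$, so the inequality~\eqref{eq:privacy_con} of Theorem~\ref{thm:privacy_con} is equivalent to
\begin{align*}
    \log\det(\Sigma_{XX})-\log\det(\Gamma)\le -\,n\log\kappa_{\ell}(\varepsilon,\delta),
\end{align*}
that is, to $n\log\kappa_{\ell}(\varepsilon,\delta)\le -2\,I(X;Y)$. Hence, under Assumption~\ref{asm:XY}, the linear Gaussian mechanism~\eqref{eq:Gaussian} is $(\varepsilon,\delta)$-PML private if and only if $n\log\kappa_{\ell}(\varepsilon,\delta)\le -2\,I(X;Y)$, and by Corollary~\ref{cor:privacy_con} this condition already entails~\eqref{asm:nece}.

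The two statements then follow by a short chain of inequalities. For~(1): $(\varepsilon,\delta)$-PML privacy gives $n\log\kappa_{\ell}(\varepsilon,\delta)\le -2I(X;Y)$; the hypothesis~\eqref{eq:mi2} reads $-2\varepsilon_{\rm MI}\le n\log\kappa_{\ell}(\varepsilon,\delta)$; combining the two yields $I(X;Y)\le\varepsilon_{\rm MI}$, i.e.\ \eqref{eq:mip}. For~(2): $\varepsilon_{\rm MI}$-MI privacy gives $-2\varepsilon_{\rm MI}\le -2I(X;Y)$; the hypothesis~\eqref{eq:mi} reads $n\log\kappa_{\ell}(\varepsilon,\delta)\le -2\varepsilon_{\rm MI}$; chaining gives $n\log\kappa_{\ell}(\varepsilon,\delta)\le -2I(X;Y)$, so the equivalence above (together with the assumed~\eqref{asm:nece}) delivers $(\varepsilon,\delta)$-PML privacy.

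I do not expect a genuine obstacle; the proof is essentially a change of variables that lines up $\kappa_{\ell}(\varepsilon,\delta)$ and $I(X;Y)$ through $\Gamma$. The only points needing care are bookkeeping: verifying the non-degeneracy conditions so that Theorems~\ref{thm:joint_g} and~\ref{thm:privacy_con} apply (in particular $\Theta\succ 0$ and $l=\rank(C)$), and checking the boundary values of $\delta$ allowed in part~(2) — for $\delta=1$ one has $\fF^{-1}_{\chi^2_l}(0)=0$, while $\delta=0$ is incompatible with~\eqref{asm:nece}, so neither case affects the argument.
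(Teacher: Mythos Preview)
Your proposal is correct and follows essentially the same approach as the paper: both establish the equivalence ``$(\varepsilon,\delta)$-PML private $\Longleftrightarrow$ $n\log\kappa_{\ell}(\varepsilon,\delta)\le -2I(X;Y)$'' (the paper packages this as Lemma~\ref{lem:mi}) and then derive parts (i) and (ii) by the same two-line chain of inequalities. The only minor difference is in the computation of $I(X;Y)$: the paper goes through the entropy formula and Sylvester's determinant identity to reach $\Gamma^{-1}=\Sigma_{XX}^{-1}+C^\top\Theta^{-1}C$, whereas your Schur-complement factorization $\det(\Sigma)=\det(\Sigma_{YY})\det(\Gamma)$ lands directly on $I(X;Y)=\tfrac{1}{2}(\log\det\Sigma_{XX}-\log\det\Gamma)$, which is a slightly cleaner route to the same identity.
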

\begin{proof}
    See Appendix~\ref{app:pmlmi}.
\end{proof}

    Item~i) of Theorem~\ref{thm:pmlmi} yields a direct bound on the $\varepsilon_{\rm MI}$-MI privacy parameter from $(\varepsilon,\delta)$-PML privacy. In contrast, item~ii) shows that $\varepsilon_{\rm MI}$-MI privacy provides an indirect bound on $(\varepsilon,\delta)$-PML privacy via $\kappa_{\ell}(\varepsilon,\delta)$. Using~\eqref{eq:kappa}, \eqref{eq:mi} can be rearranged as
\[
2\,\varepsilon_{\rm MI} + \frac{1}{2} \fF_{\chi^2_l}^{-1}(1-\delta) \le \varepsilon.
\]
This implies that even when $\varepsilon_{\rm MI}$ is small, reducing the probability that the worst-case leakage exceeds a given threshold, i.e., making $\delta$ small, causes the increase of $\varepsilon$.
This reflects the fundamental distinction between MI and PML: the former evaluates average privacy leakage, whereas the later is concerned with the worst-case privacy leakage.



\section{Private Kalman Filtering}
\label{sec:filter}
In this section, we revisit the Kalman filter from the viewpoint of \((\varepsilon, \delta)\)-PML privacy, where the state and output represent private and public information, respectively. We estimate a lower bound on the covariance of the steady-state estimation error by using \((\varepsilon, \delta)\). 

Consider a discrete-time LTI Gaussian system, described by
\begin{align}\label{eq:LTI}
    \left\{
    \begin{alignedat}{2}
        X_{k+1} &= A X_k + W_k,\\
        Y_k &= C X_k + V_k,
    \end{alignedat}\right.
\end{align}
where the state $X_k \in \cB (\Omega,\bR^n)$ and output $Y_k \in \cB (\Omega,\bR^m)$ are private and public information, respectively. 
Different from standard estimation problems, the state noise $W_k \sim \cN(0, Q)$ and measurement noise $V_k \sim \cN(0, \Theta)$ are designed and added in the purpose of privacy protection, where $W_k$ and $V_l$, $k,l \in \bZ_+$ are independent from each other, and each $W_k$ and $V_k$ is independent across time $k \in \bZ_+$. Matrices $Q \in \bS_{++}^n$ and $\Theta \in \bS_{++}^m$ as well as Schur stable $A \in \bR^{n \times n}$ and $C \in \bR^{m \times n}$ of rank $m$ are public. {It is standard to assume that $C$ is of full row rank; however, the results in this section can be readily extended to the case of ${\rm rank} \; C = \ell < m$, $\ell > 0$.}

We consider a scenario where an eavesdropper computes the state estimate \( \hat X_k \) using the Kalman filter \cite{anderson2005optimal}:
\begin{align}\label{eq:Kalman}
\hat X_{k+1} = A \hat X_k + K_k (Y_k - C A \hat X_k),
\end{align}
where
\begin{subequations}\label{eq:pk}
\begin{align}
    P^-_k &= A P_{k-1} A^{\top} + Q, \\
    K_k &= P^-_k C^\top (C P^-_k C^\top + \Theta )^{-1}, \\
    P_k &= (I_n - K_k C ) P^-_k.
\end{align}
\end{subequations}

Since $A$ is Schur stable, the pair $(C,A)$ is detectable. Thus, $Q \succ 0$ and $\Theta \succ 0$ imply that the following steady-state covariance is symmetric and positive definite:
\begin{align}\label{eq:P}
P:= \lim_{k \to \infty} P_k = \lim_{k \to \infty} \bE [(\hat X_k - X_k)(\hat X_k -X_k)^\top].
\end{align}
We use this as a privacy metric of the state and establish a connection with PML-privacy. In particular, we estimate a lower bound on $P$ by the PML parameters $(\varepsilon, \delta)$ of the linear Gaussian mechanism~\eqref{eq:Gaussian} with the same $C$ and $Q$ as those in the system~\eqref{eq:LTI}, where $l = m$.  We take the steady-state distribution of $X_k$ to be the prior information of~\eqref{eq:Gaussian}; here, $\Sigma_{XX} \in \bS_{++}^n$ is the solution to the following discrete-time Lyapunov equation:
\begin{align}\label{eq:Lya}
    \Sigma_{XX} = A \Sigma_{XX} A^{\top} + Q.
\end{align}
From the Schur stability of $A$, we know that $\Sigma_{XX} \in \bS_{++}^n$ is the unique symmetric and positive definite solution.

The following theorem provides a lower bound on $P$ by using the PML parameters $(\varepsilon, \delta)$. 
\begin{secthm}\label{thm:kalman_lb}
Consider the LTI system~\eqref{eq:LTI} and its Kalman filter~\eqref{eq:Kalman}. If the corresponding linear Gaussian mechanism~\eqref{eq:Gaussian} is $(\varepsilon, \delta)$-PML private with the prior variance $\Sigma_{XX}$ in~\eqref{eq:Lya}, then it follows that
\begin{align}\label{eq:PMLsys}
   \log \det(P) \ge  \frac{1}{2}\fF^{-1}_{\chi^2_m}(1-\delta) - \varepsilon + \log \det(Q),
\end{align}
where $P \in \bS_{++}^n$ is the steady-state covariance~\eqref{eq:P}.
\end{secthm}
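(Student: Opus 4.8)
The plan is to identify the matrix $\Gamma$ of Theorem~\ref{thm:privacy_con} with a Gaussian posterior covariance, to put both it and the steady-state Kalman covariance into information form, and then to use two elementary monotonicity facts. For the linear Gaussian mechanism~\eqref{eq:Gaussian} with prior $\Sigma_{XX}$ from~\eqref{eq:Lya} and noise $V \sim \cN(0,\Theta)$, one has $\Sigma_{XY} = \Sigma_{XX} C^\top$ and $\Sigma_{YY} = C\Sigma_{XX}C^\top + \Theta$, so $\Gamma = \Sigma_{XX} - \Sigma_{XX}C^\top(C\Sigma_{XX}C^\top + \Theta)^{-1}C\Sigma_{XX}$ and, by the matrix-inversion lemma, $\Gamma^{-1} = \Sigma_{XX}^{-1} + C^\top \Theta^{-1} C$; moreover $\rank(\Sigma_{XY}) = \rank(C) = m$ since $\Sigma_{XX}\succ0$, so Theorem~\ref{thm:privacy_con} applies with $l = m$. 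Hence $(\varepsilon,\delta)$-PML privacy of~\eqref{eq:Gaussian} is equivalent to
\[
\log\det(\Gamma) - \log\det(\Sigma_{XX}) \ \ge\ \frac{1}{2}\fF^{-1}_{\chi^2_m}(1-\delta) - \varepsilon,
\]
which is the only place the PML hypothesis enters.

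Next I would recast the Kalman recursion~\eqref{eq:pk} in the same form: the measurement update gives $P_k^{-1} = (P^-_k)^{-1} + C^\top\Theta^{-1}C$, and passing to the limit (the steady state exists and is positive definite, as established above) gives $P^{-1} = (P^-)^{-1} + C^\top\Theta^{-1}C$, where $P^- := \lim_{k\to\infty} P^-_k = APA^\top + Q$. Applying the matrix-determinant lemma to this and to $\Gamma^{-1} = \Sigma_{XX}^{-1} + C^\top\Theta^{-1}C$ yields
\[
\log\det(P) = \log\det(P^-) - \log\det(I_n + P^- C^\top\Theta^{-1}C), \qquad \log\det(I_n + \Sigma_{XX}C^\top\Theta^{-1}C) = \log\det(\Sigma_{XX}) - \log\det(\Gamma).
\]

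It then remains to compare $P^-$ with $Q$ and with $\Sigma_{XX}$. First, $P^- = APA^\top + Q \succeq Q$ because $P\succeq 0$, so $\log\det(P^-) \ge \log\det(Q)$. Second, $P^- \preceq \Sigma_{XX}$: at steady state $P = (I_n - KC)P^- = P^- - P^-C^\top(CP^-C^\top + \Theta)^{-1}CP^- \preceq P^-$, hence $P^- = APA^\top + Q \preceq AP^-A^\top + Q$; subtracting this from $\Sigma_{XX} = A\Sigma_{XX}A^\top + Q$ shows $E := \Sigma_{XX} - P^-$ satisfies $E \succeq AEA^\top$, and iterating the congruence $M\mapsto AMA^\top$ gives $E \succeq A^k E (A^\top)^k$ for all $k$, so $E \succeq 0$ since $A^k \to 0$ by Schur stability. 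Combining $P^-\preceq\Sigma_{XX}$ with the identity $\det(I_n + MC^\top\Theta^{-1}C) = \det(I_m + \Theta^{-1/2}CMC^\top\Theta^{-1/2})$ and monotonicity of $\det$ on positive definite matrices gives $\log\det(I_n + P^-C^\top\Theta^{-1}C) \le \log\det(I_n + \Sigma_{XX}C^\top\Theta^{-1}C)$. Substituting both comparisons and then the second displayed identity into the first gives
\[
\log\det(P) \ \ge\ \log\det(Q) - \log\det(I_n + \Sigma_{XX}C^\top\Theta^{-1}C) \ =\ \log\det(Q) - \log\det(\Sigma_{XX}) + \log\det(\Gamma),
\]
and combining with the PML inequality of the first paragraph yields~\eqref{eq:PMLsys}.

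The main obstacle is the bound $P^- \preceq \Sigma_{XX}$, the only step that genuinely uses Kalman-filter structure rather than matrix algebra; the argument via $E \succeq AEA^\top$ and $A^k\to0$ avoids any delicate discussion of Riccati convergence and uses only the already-assumed Schur stability of $A$. The remaining care points are minor: justifying the interchange of limit with the algebraic identities by continuity, and noting that $\Theta \succ 0$ is precisely what makes $C^\top\Theta^{-1}C$ and all the matrix inverses above well defined.
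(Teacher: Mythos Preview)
Your proof is correct and follows essentially the same route as the paper's: apply Theorem~\ref{thm:privacy_con} with $l=m$, put both $\Gamma$ and the steady-state $P$ into information form via the matrix-inversion lemma, and then combine the two orderings $Q \preceq P^- \preceq \Sigma_{XX}$ with monotonicity of $\det$ (through Sylvester's identity) to reach~\eqref{eq:PMLsys}. The only minor variation is in establishing $P^- \preceq \Sigma_{XX}$: the paper subtracts the algebraic Riccati equation from the Lyapunov equation to obtain a Lyapunov \emph{equation} for $\Sigma_{XX}-P^-$ with a positive semidefinite forcing term, whereas you derive the Lyapunov \emph{inequality} $E \succeq AEA^\top$ from $P \preceq P^-$ and iterate using $A^k\to 0$; both arguments hinge on the Schur stability of $A$ and are equally valid.
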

\begin{proof}
See Appendix~\ref{app:kalman}
\end{proof}

Theorem~\ref{thm:kalman_lb} give a lower bound on $\det(P)$. We can also obtain a lower bound on $\tr(P)$ as follows.

\begin{seccor}\label{cor:kalman_lb}
Consider the  LTI system~\eqref{eq:LTI} and its Kalman filter~\eqref{eq:Kalman}. If the corresponding linear Gaussian mechanism~\eqref{eq:Gaussian} is $(\varepsilon, \delta)$-PML private with the prior variance $\Sigma_{XX}$ in~\eqref{eq:Lya}, then it follows that
\begin{align}\label{eq2:PMLsys}
    \tr(P) \geq n + \frac{1}{2}\fF^{-1}_{\chi^2_m}(1-\delta) - \varepsilon + \log \det(Q),
\end{align}
where $P \in \bS_{++}^n$ is the steady-state covariance~\eqref{eq:P}.
\end{seccor}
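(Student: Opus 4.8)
The plan is to deduce the trace bound from the determinant bound already established in Theorem~\ref{thm:kalman_lb}, via the standard inequality between the trace and the log-determinant of a positive definite matrix. Concretely, for any $M \in \bS^n_{++}$ one has $\tr(M) \ge n + \log\det(M)$; this follows by writing $\tr(M) = \sum_{i=1}^n \lambda_i(M)$ and $\log\det(M) = \sum_{i=1}^n \log \lambda_i(M)$, and applying the elementary scalar inequality $t \ge 1 + \log t$ for all $t > 0$ to each eigenvalue $\lambda_i(M) > 0$. Since $P \in \bS^n_{++}$ by the remarks preceding Theorem~\ref{thm:kalman_lb}, this applies with $M = P$.

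The steps, in order, are: first, invoke $P \succ 0$ so that all eigenvalues are strictly positive and $\log\det(P)$ is well defined; second, apply the eigenvalue-wise inequality $\lambda \ge 1 + \log\lambda$ to obtain $\tr(P) \ge n + \log\det(P)$; third, substitute the lower bound on $\log\det(P)$ from \eqref{eq:PMLsys}, namely $\log\det(P) \ge \tfrac{1}{2}\fF^{-1}_{\chi^2_m}(1-\delta) - \varepsilon + \log\det(Q)$, to conclude
\[
\tr(P) \ge n + \frac{1}{2}\fF^{-1}_{\chi^2_m}(1-\delta) - \varepsilon + \log\det(Q),
\]
which is exactly \eqref{eq2:PMLsys}. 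No additional hypotheses beyond those of Theorem~\ref{thm:kalman_lb} are needed, since the corollary shares the same standing assumptions.

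There is essentially no obstacle here: the only thing to be slightly careful about is that the bound $\tr(P) \ge n + \log\det(P)$ is genuinely valid for \emph{all} positive definite $P$ (not merely those with eigenvalues $\ge 1$), which is guaranteed by the scalar inequality $t \ge 1 + \log t$ holding on all of $(0,\infty)$ with equality only at $t=1$. Thus the corollary is an immediate consequence of Theorem~\ref{thm:kalman_lb}, and the proof amounts to recording this one-line reduction.
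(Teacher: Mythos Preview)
Your proposal is correct and follows essentially the same argument as the paper's own proof: apply the scalar inequality $t - 1 \ge \log t$ (equivalently $t \ge 1 + \log t$) to each eigenvalue of $P$ to obtain $\tr(P) - n \ge \log\det(P)$, then invoke the determinant bound~\eqref{eq:PMLsys} from Theorem~\ref{thm:kalman_lb}.
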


\begin{proof}
It follows from $a - 1 \ge \log(a)$, $a > 0$ that
\begin{align*}
\tr(P) - n 
&= \sum_{i=1}^n (\lambda_i(P) - 1) \\
&\ge \sum_{i=1}^n \log(\lambda_i(P)) = \log \det(P).
\end{align*}
This and~\eqref{eq:PMLsys} imply~\eqref{eq2:PMLsys}.
\end{proof}

Theorem~\ref{thm:kalman_lb} and Corollary~\ref{cor:kalman_lb} provide lower bounds on the steady-steady covariance error $P$ using the parameters $(\varepsilon, \delta)$ of PML privacy. These lower bounds correspond to natural observations: i) increasing $\det (Q)$ degrades estimation accuracy for fixed $(\varepsilon, \delta)$, i.e., injecting large state noise enhances state privacy; ii) from Corollary~\ref{prop:linear}, for fixed $\Sigma_{XX}$, enlarging $\Theta$ tends to increasing the $(\varepsilon, \delta)$-PML privacy level, which degrades estimation accuracy, i.e., enhances state privacy. We further demonstrate how our results can be applied in the following section.



\section{Distributed Privacy-Aware Aggregation}
\label{sec:aggregation}
Motivated by the example in Section~\ref{sec:example}, we apply our results to a problem of distributed privacy-aware measurement aggregation. Our objective is to design privacy-preserving noise for each local measurement when an adversary employs a Kalman filter for state estimation. We employ convex optimization techniques to design the optimal additive noise for each local measurement, balancing the trade-off between privacy guarantee and aggregation accuracy, which is illustrated by the example in Section~\ref{sec:example}.

\subsection{Optimal Noise Design}
Consider a distributed system, consisting of $N$ discrete-time LTI Gaussian subsystems:
\begin{align}\label{eq:sys-dynamics}
    \left\{
    \begin{alignedat}{2}
        X_{i, k+1} &= A_i X_{i, k} + W_{i,k}, \\
        Y_{i, k} &= C_i X_{i,k} + V_{i,k}, \quad i = 1, \ldots, N,
    \end{alignedat}\right.
\end{align}
where the local state $X_{i,k} \in \cB (\Omega,\bR^{n_i})$ and local output $Y_{i,k} \in \cB (\Omega,\bR^{m_i})$ are private and public information, respectively. The input noise ${W}_{i,k}\sim \cN(0, Q_i)$ and $V_{i,k} \sim \cN(0, \Theta_i)$ are designed for the purpose of privacy protection, where they are independent of each other and also independent among each agent.
For each $i=1,\dots,N$, $Q_i \in \bS_{++}^{n_i}$, $\Theta_i \in \bS_{++}^{n_i}$, Schur stable $A_i \in \bR^{n_i \times n_i}$, and $C_i \in \bR^{m_i \times n_i}$ of rank $m_i$ are public.

Each subsystem transmits its privatized output \({Y}_{i,k}\) to the fusion center, which then aggregates all its received measurements to compute the global measurement:
\begin{align}\label{eq:aggY}
    \bar{Y}_k = \sum_{i=1}^N L_i {Y}_{i,k},
\end{align}
where $\bar{Y}_k \in \bR^q$, and \(L_i \in \bR^{q \times m_i}\) denotes the aggregation weight associated with subsystem \(i\).

Noise ${W}_{i,k}$ and $V_{i,k}$ added for privacy protection in general degenerate data equality. To quantify this effect, we define the following accuracy metric:
\begin{align*}
    \cJ
    &:= \sum_{i=1}^N \bE [ (L_i ({Y}_{i,k} - CX_{i,k}))^\top L_i ({Y}_{i,k} - CX_{i,k})] \\
    &\; = \sum_{i=1}^N \tr(L_i \Theta_i L_i^\top).
\end{align*}
A smaller \(\mathcal{J}\) implies a smaller aggregation error, i.e., higher data accuracy.
Our objective is to design the noise that minimizes $\mathcal{J}$ while guaranteeing a required privacy level for each local plant $i=1,\dots,N$. 

We use $(\varepsilon_i, \delta_i)$-PML privacy of the corresponding linear Gaussian mechanism~\eqref{eq:Gaussian} as the privacy criterion of each local plant. As shown in Theorem~\ref{thm:kalman_lb}, $(\varepsilon_i, \delta_i)$ gives a lower bound on the steady-state error covariance of the Kalman filter if the prior distribution is selected as the solution $\Sigma_{i,XX} \in \bS_{++}^{n_i}$ to the following discrete-time Lyapunov equation:
\begin{align}\label{eq:Lya_l}
    \Sigma_{i,XX} = A_i \Sigma_{i,XX} A_i^{\top} + Q_i.
\end{align}
Then, from Corollary~\ref{prop:linear}, the linear Gaussian mechanism~\eqref{eq:Gaussian} corresponding to each local plant is $(\varepsilon_i, \delta_i)$-PML private if $\Theta_i$ is designed as to satisfy
\begin{align}\label{eq:lmi_con_linear}
       \Theta_i \succeq \frac{\kappa_{\ell_i}(\varepsilon_i, \delta_i)}{1 - \kappa_{\ell_i}(\varepsilon_i, \delta_i)}C_i\Sigma_{i,XX}C_i^{\top},
\end{align}
where $\kappa_{\ell}(\varepsilon, \delta)$ is defined in~\eqref{eq:kappa}.

In summary, the distributed privacy-constrained aggregation problem with input noise can be formulated as the following distributed optimization problem:
\begin{align}\label{eq:opt1}
&\min_{{Q}_i, \Theta_i, \Sigma_{i,XX}} \tr (L_i \Theta_i L_i^{\top}) \\
& \text{s.t.} \quad {Q}_i \succ 0, \Theta_i \succ 0, \text{\eqref{eq:Lya_l}, \eqref{eq:lmi_con_linear}} \nonumber.
\end{align}
This contains an equality constraint~\eqref{eq:Lya_l}, which can be relaxed into
\begin{align}\label{eq:opt2}
&\min_{\hat Q_i, \Theta_i, \Sigma_{i,XX}} \tr (L_i \Theta_i L_i^{\top}) \\
& \text{s.t.} \quad \text{$\hat Q_i \succ 0, \Theta_i \succ 0$, $\Sigma_{i,XX} \succeq A_i \Sigma_{i,XX} A_i^{\top} + \hat Q_i$, \eqref{eq:lmi_con_linear}} \nonumber.
\end{align}
Selecting $Q_i := \Sigma_{i,XX} - A_i \Sigma_{i,XX} A_i^{\top} (\succeq \hat Q_i)$ recovers the solution to~\eqref{eq:opt1}. Replacing $\hat Q_i, \Theta_i \succ 0$ with $\hat Q_i \succeq c I_n$ and $\Theta_i \succeq c I_m$ for $c>0$ makes~\eqref{eq:opt2} a convex semi-definite program (SDP). 

Since the distributed privacy-aware measurement aggregation problem is formulated as a convex optimization problem~\eqref{eq:opt2}, it is easy to include additional constraints and design conditions. For instance, one may influence an eavesdripper's prior knowledge by specifying the range of the  prior distribution, e.g. through adding $\overline{\Sigma}_{i,XX} \succeq \Sigma_{i,XX} \succeq \underline{\Sigma}_{i,XX}$, where $\overline{\Sigma}_{i,XX}$ and $\underline{\Sigma}_{i,XX}$ are upper and lower bounds, respectively. Similarly, bounds on the distributions of stage and measurement noise can be included.

\begin{secrem}
\label{rem:noise_design}
In the example in Section~\ref{sec:example}, the distribution $\cN(0, Q_i)$ of $W_{i,k}$ is not a design parameter as $W_{i,k}$ models human behaviors. In this case, solving~\eqref{eq:Lya_l} gives $\Sigma_{i,XX}$. Moreover, from~\eqref{eq:lmi_con_linear}, the optimal solution $\Theta_i$ is an arbitrary $\Theta_i \succ 0$ such that
\begin{align}\label{eq:opt3}
L_i \Theta_i L_i^\top 
= L_i \frac{\kappa_{\ell_i}(\varepsilon_i, \delta_i)}{1 - \kappa_{\ell_i}(\varepsilon_i, \delta_i)}C_i\Sigma_{i,XX}C_i^{\top} L_i^\top.
\end{align}
 \red
\end{secrem}


\subsection{Simulations}

We revisit the motivating example in Section~\ref{sec:example} with the parameters in~\cite{weng2025optimal}. {Consider \eqref{eq:sys-dynamics} with $N = 3$}, where $A_i = 0.75$, $C_i = 1$ and $L_i = 1/3$, {$i=1,2,3$}. 
{In this problem, $Q_i = 0.4$, $i=1,2,3$ is not a design parameter. Also from~\eqref{eq:Lya_l}, we have $\Sigma_{i,XX}= Q_i/(1 -A_i^2) = 6.4$, $i=1,2,3$.} We select the privacy parameters as $\varepsilon_1 = 6$, $\varepsilon_2 = 7$, $\varepsilon_3 = 8$, and $\delta_i = 0.001$, $i = 1,2,3$, which satisfies~\eqref{asm:nece}. 

{Each optimal solution $\Theta_i$ to~\eqref{eq:opt1} is obtained by solving~\eqref{eq:opt3} as in}
\begin{align*}
    \Theta_1 = 1.15, \; \Theta_2 = 0.23, \; \Theta_3= 0.07.
\end{align*}
This verifies that stronger $(\varepsilon_i, \delta_i)$-PML privacy level (in this case $i=1$) yields larger variance $\Theta_i$. In other words, larger noise is required to be injected. Accordingly, the estimation performance by the Kalman filter becomes poor as confirmed by Fig.~\ref{fig:1}. Finally, Fig.~\ref{fig:4} presents the true aggregated values $\bar{Z}_k = \sum_{i=1}^N L_i Z_{i,k}$ and privatized one~\eqref{eq:aggY}, showing the inherent privacy–accuracy trade-off.

\begin{figure}
    \centering
    \includegraphics[width=1\linewidth]{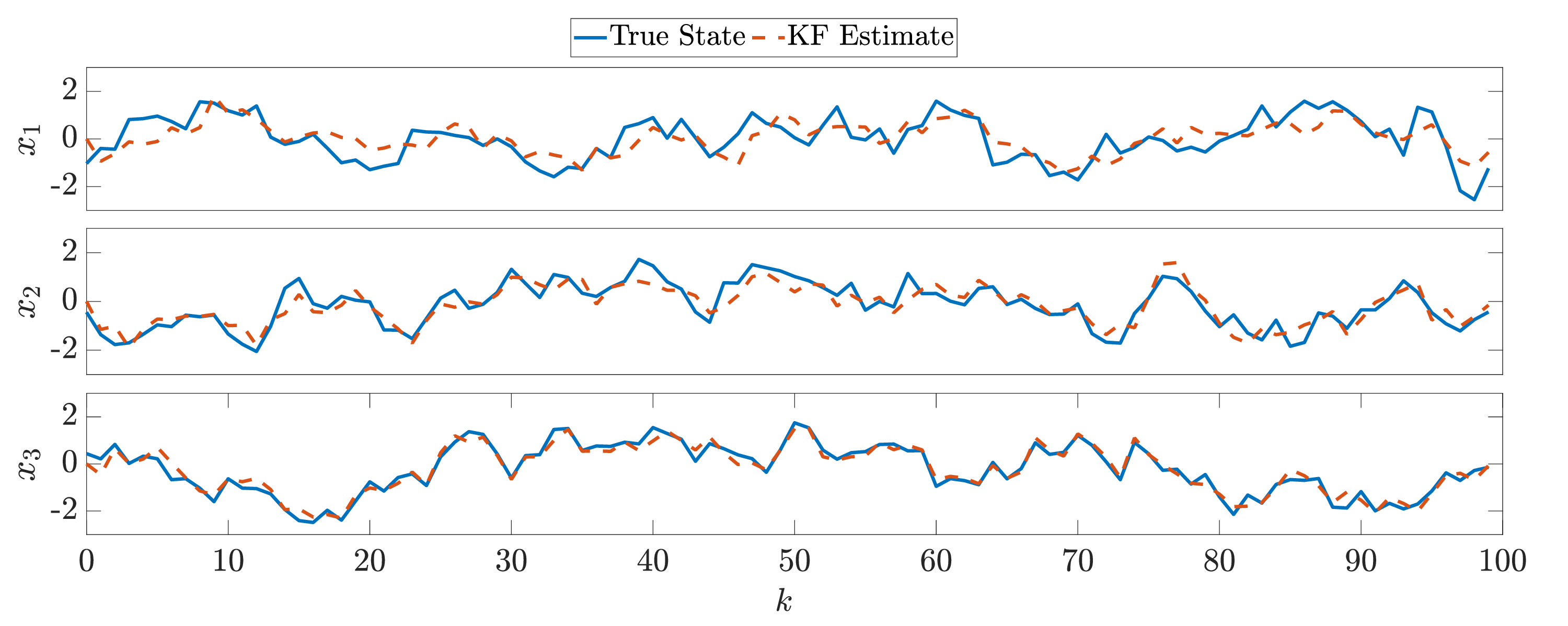}
    \caption{True State and its Kalman filter estimate for each subsystem}
    \label{fig:1}
\end{figure}

\begin{figure}
    \centering
    \includegraphics[width=1\linewidth]{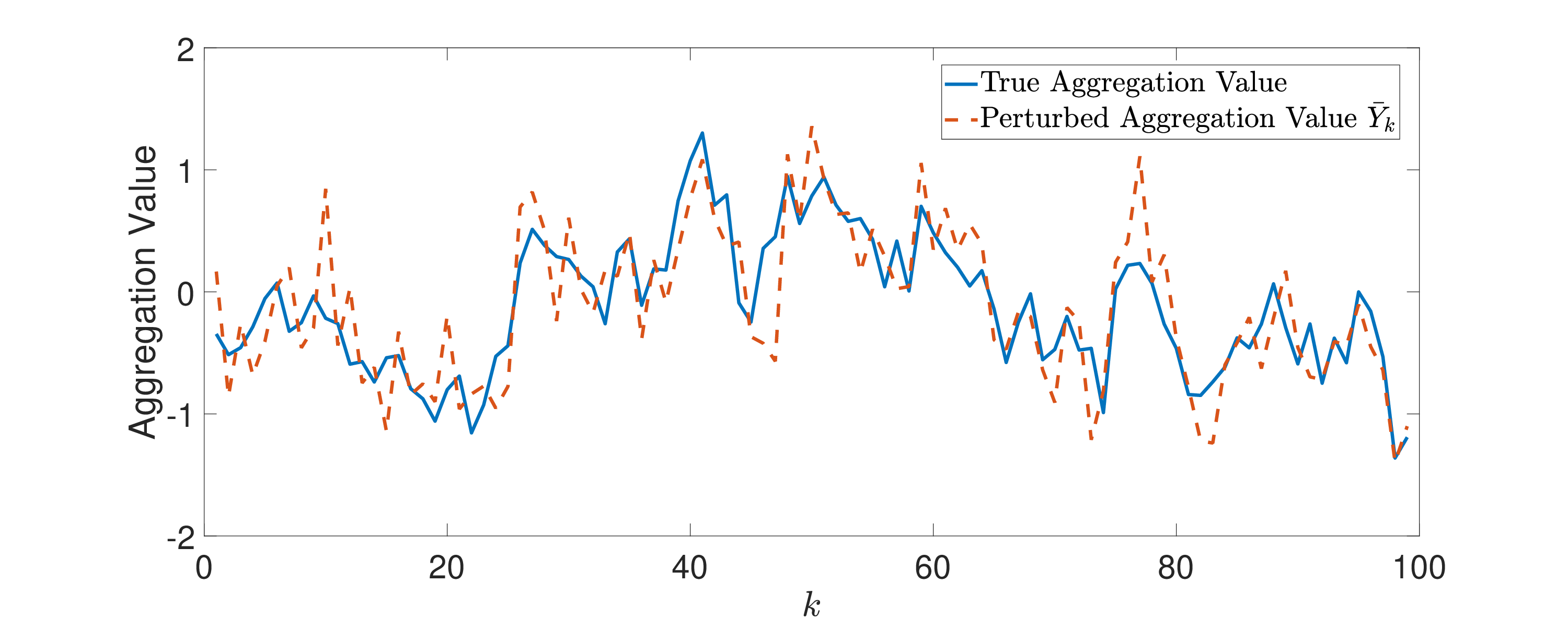}
    \caption{Aggregation Value Comparison}
    \label{fig:4}
\end{figure}

\section{Conclusion}\label{sec:con}
In this paper, we have investigated PML in Gaussian settings. For static Gaussian models, we have established a necessary and sufficient condition for PML and have proposed a framework of Gaussian mechanism design in terms of LMIs. We have also shown a relation of PML with DP and MI privacy. Furthermore, in the context of discrete-time linear Gaussion systems, we have studied the connection between Kalman filters and PML. In particular, we have shown that PML gives a lower bound on the covariance of steady-state estimation error. Finally, we have applied the proposed results to a distributed measurement-aggregation architecture in which each subsystem transmits privatized measurements prior to aggregation and have validated the resulting privacy–accuracy trade-off on multi-area climate-monitoring case studies. 

Future work includes extending our results to a more general class of systems, such as nonlinear and uncertain systems, as well as considering different types of noise, including heavy-tailed and correlated noise. Moreover, co-design of control and noise for closed-loop privacy protection remains an open problem.



\appendices
\section{Proof of Theorem \ref{thm:joint_g}}
\label{app:thm:joint_g}
(Step 1) We reformulate the supremum problem as a standard least squares problem and compute its maximizer $x^*$. 
Since the joint Gaussian distribution in non-degenerate, $\Gamma$ is of full rank. 
Thus, it follows from the standard Bayesian analysis~\cite{anderson2005optimal} that 
\begin{align}\label{pf1:joint_g}
    &f_{X \mid Y}(x \mid y) \nonumber\\
    &  = \frac{1}{\det (\Gamma)}  \exp \left(-\frac{1}{2} |x - \mu_X - \Sigma_{XY} \Sigma_{YY}^{-1} (y - \mu_Y)|_{\Gamma^{-1}}^2 \right).
\end{align}
Utilizing~\eqref{pf1:joint_g}, it can be computed that 
    \begin{align}\label{pf2:joint_g}
    \frac{f_{X \mid Y}(x \mid y)}{f_X(x)}
=  \frac{\det (\Sigma_{XX})}{\det (\Gamma)} \exp \Bigl(-\frac{1}{2} \Pi \Bigr),
\end{align}
where
\begin{align*}
\Pi &:= |x - \mu_X - \Sigma_{XY} \Sigma_{YY}^{-1} (y - \mu_Y)|_{\Gamma^{-1}}^2 + |x - \mu_X|_{\Sigma_{XX}^{-1}}^2 \nonumber\\
&\; = |x - \mu_X |_{\Gamma^{-1} - \Sigma_{XX}^{-1}}^2 \nonumber\\
&\qquad - 2(y- \mu_{Y})^{\top}\Sigma_{YY}^{-1}\Sigma_{XY}^\top\Gamma^{-1}(x - \mu_{X}) \nonumber\\
&\qquad + |\Sigma_{XY} \Sigma_{YY}^{-1} (y - \mu_Y)|_{\Gamma^{-1}}^2.
\end{align*}
By the matrix inversion lemma~\cite[Theorem 3.2.2]{petersen2008matrix}, we obtain
\begin{align}
\label{eq:Gamma}
    \Gamma^{-1} 
    &= (\Sigma_{XX} - \Sigma_{XY}\Sigma_{YY}^{-1}\Sigma_{XY}^\top)^{-1} \nonumber \\
    & = \Sigma_{XX}^{-1} + \Sigma_{XX}^{-1} \Sigma_{XY}\Psi^{-1}\Sigma_{XY}^\top \Sigma_{XX}^{-1}.
\end{align}
Consequently, the matrix in the first term satisfies
\begin{align*}
\Gamma^{-1} - \Sigma_{XX}^{-1} = \Sigma_{XX}^{-1} \Sigma_{XY}\Psi^{-1}\Sigma_{XY}^\top \Sigma_{XX}^{-1}.
\end{align*}
From~\eqref{eq23:jg_pml} and~\eqref{eq:Gamma}, the matrix in the second term can be rearranged as
\begin{align*}
&\Sigma_{YY}^{-1} \Sigma_{XY}^\top\Gamma^{-1} \\
&=\Sigma_{YY}^{-1} \Sigma_{XY}^\top(\Sigma_{XX}^{-1} + \Sigma_{XX}^{-1} \Sigma_{XY}\Psi^{-1}\Sigma_{XY}^\top \Sigma_{XX}^{-1}) \\
&= \Sigma_{YY}^{-1}\left(\Psi + \Sigma_{XY}^\top\Sigma_{XX}^{-1}\Sigma_{XY}\right) \Psi^{-1}\Sigma_{XY}^\top \Sigma_{XX}^{-1} \\
&= \Psi^{-1}\Sigma_{XY}^\top \Sigma_{XX}^{-1}.
\end{align*}
Substituting these into~$\Pi$ yields
\begin{align}\label{pf3:joint_g}
\Pi 
& = |\Sigma_{XY}^\top \Sigma_{XX}^{-1} (x - \mu_X) |_{\Psi^{-1}}^2 \nonumber\\
&\qquad - 2(y- \mu_{Y})^{\top}\Psi^{-1}\Sigma_{XY}^\top \Sigma_{XX}^{-1}(x - \mu_{X}) \nonumber\\
&\qquad + |\Sigma_{XY} \Sigma_{YY}^{-1} (y - \mu_Y)|_{\Gamma^{-1}}^2 \nonumber\\
& = |\Sigma_{XY}^\top \Sigma_{XX}^{-1} (x - \mu_X) - (y- \mu_{Y})|_{\Psi^{-1}}^2 \nonumber\\
&\qquad + |\Sigma_{XY} \Sigma_{YY}^{-1} (y - \mu_Y)|_{\Gamma^{-1}}^2
  - |(y - \mu_Y)|_{\Psi^{-1}}^2.
\end{align}
Since \( \Sigma \) is positive definite, by the Schur complement, \( \Psi \) in~\eqref{eq23:jg_pml} is also positive definite. Therefore, by standard least squares analysis, it follows from~\eqref{pf2:joint_g} and~\eqref{pf3:joint_g} that 
\begin{align*}
        x^{*}& =\underset{x \in \bR^n}{\arg \esssup} \frac{f_{X \mid Y}(x \mid Y = y)}{f_X(x)} \\
         &=  \underset{x \in \bR^n}{\arg \max} \frac{f_{X \mid Y}(x \mid Y = y)}{f_X(x)}\\
         &= \underset{x \in \bR^n}{\arg \min} |\Sigma_{XY}^\top \Sigma_{XX}^{-1} (x - \mu_X) - (y- \mu_{Y})|_{\Psi^{-1}}^2.
\end{align*}

From~\eqref{eq24:jg_pml}, $x^* \in \bR^n$ is a minimizer of $|\Sigma_{XY}^\top \Sigma_{XX}^{-1} (x - \mu_X) - (y- \mu_{Y})|_{\Psi^{-1}}^2$ if and only if
\begin{align*}
U D V^{\top} (x^* - \mu_X) - \Psi^{-\frac{1}{2}}(y- \mu_{Y}) = 0.
\end{align*}
According to \cite{golub2013matrix} and the structure of the compact SVD, its standard solution is 
\begin{align}\label{pf4:joint_g}
x^{*} = \mu_X + V D^{-1} U^\top \Psi^{-\frac{1}{2}}(y - \mu_Y).
\end{align}

(Step 2)
We derive the expression~\eqref{eq:jg_pml}.
Substituting $x^*$ of~\eqref{pf4:joint_g} into \eqref{pf3:joint_g} leads to 
\begin{align*}
\Pi  & = |(U U^\top - I) \Psi^{-\frac{1}{2}} (y - \mu_Y)|^2 \\
&\qquad + |\Sigma_{XY} \Sigma_{YY}^{-1} (y - \mu_Y)|_{\Gamma^{-1}}^2 - |(y - \mu_Y)|_{\Psi^{-1}}^2 \\
&= -|U^\top \Psi^{-\frac{1}{2}} (y - \mu_Y)|^2 + |\Sigma_{XY} \Sigma_{YY}^{-1} (y - \mu_Y)|_{\Gamma^{-1}}^2\\
&= -|\Sigma_{YY}^{-1}(y - \mu_Y)|_{\Sigma_{YY} \Psi^{-\frac{1}{2}} U U^\top \Psi^{-\frac{1}{2}} \Sigma_{YY} 
- \Sigma_{XY}^\top \Gamma^{-1}\Sigma_{XY}}^2.
\end{align*}
From~\eqref{eq:PML}, \eqref{eq:jg_pml}, and~\eqref{pf2:joint_g}, it remains to show
\begin{align}\label{pf5:joint_g}
&\Sigma_{YY} \Psi^{-\frac{1}{2}} U U^\top \Psi^{-\frac{1}{2}} \Sigma_{YY} - \Sigma_{XY}^\top \Gamma^{-1}\Sigma_{XY} \nonumber\\
&= \Psi^{\frac{1}{2}} U U^{\top} \Psi^{\frac{1}{2}} + \Sigma_{XY}^\top\Sigma_{XX}^{-1}\Sigma_{XY}.
\end{align}

{
To show this, define
\begin{align}\label{pf6:joint_g}
\Lambda : = \Sigma_{XY}^\top \Sigma_{XX}^{-1}\Sigma_{XY}. 
\end{align}
Then, we have 
\begin{align}\label{pf7:joint_g}
\Sigma_{YY} = \Psi + \Lambda.
\end{align}
Using this, $\Sigma_{YY}\Psi^{-\frac{1}{2}}UU^{\top}\Psi^{-\frac{1}{2}}\Sigma_{YY}$ can be rearranged as
\begin{align*}
    &\Sigma_{YY}\Psi^{-\frac{1}{2}}UU^{\top}\Psi^{-\frac{1}{2}}\Sigma_{YY} \\
    &= (\Psi + \Lambda) \Psi^{-\frac{1}{2}} U U^{\top} \Psi^{-\frac{1}{2}}(\Psi + \Lambda) \\
    &=  \Psi^{\frac{1}{2}}U U^{\top}\Psi^{\frac{1}{2}} + \Lambda \Psi^{-\frac{1}{2}} U U^{\top} \Psi^{-\frac{1}{2}}\Lambda +  \Lambda \Psi^{-\frac{1}{2}} U U^{\top} \Psi^{\frac{1}{2}}\\
    &\qquad +  \Psi^{\frac{1}{2}} U U^{\top} \Psi^{-\frac{1}{2}} \Lambda.
\end{align*}
 Using~\eqref{eq24:jg_pml} and~\eqref{pf6:joint_g}, it can be computed that
\begin{align*}
\Psi^{\frac{1}{2}} U U^{\top} \Psi^{-\frac{1}{2}} \Lambda 
&= \Psi^{\frac{1}{2}} U U^{\top} \Psi^{-\frac{1}{2}} \Sigma_{XY}^\top \Sigma_{XX}^{-1}\Sigma_{XY}\\
&= \Psi^{\frac{1}{2}} U D V^\top \Sigma_{XY}\\
&= \Sigma_{XY}^\top\Sigma_{XX}^{-1} \Sigma_{XY} = \Lambda.
\end{align*}
Also, left multiplying both sides by $\Lambda \Psi^{-1}$ gives
\begin{align*} 
     \Lambda \Psi^{-\frac{1}{2}} U U^{\top} \Psi^{-\frac{1}{2}}\Lambda  = \Lambda  \Psi^{-1} \Lambda .
\end{align*}
Thus, we have
\begin{align}\label{pf8:joint_g}
    &\Sigma_{YY}\Psi^{-\frac{1}{2}}UU^{\top}\Psi^{-\frac{1}{2}}\Sigma_{YY} \nonumber\\
    &=  \Psi^{\frac{1}{2}}U U^{\top}\Psi^{\frac{1}{2}} + \Lambda \Psi^{-1} \Lambda + 2 \Lambda.
\end{align}

Next, using~\eqref{eq:Gamma} and \eqref{pf6:joint_g},it can be shown that
\begin{align}\label{pf9:joint_g}
    &\Sigma_{XY}^\top \Gamma^{-1} \Sigma_{XY} \nonumber\\
    &= \Sigma_{XY}^\top (\Sigma_{XX}^{-1} + \Sigma_{XX}^{-1} \Sigma_{XY}\Psi^{-1}\Sigma_{XY}^\top \Sigma_{XX}^{-1}) \Sigma_{XY}  \nonumber\\
    &=  \Lambda + \Lambda \Psi^{-1} \Lambda.
\end{align}
In summary, \eqref{pf6:joint_g}--\eqref{pf9:joint_g} imply \eqref{pf5:joint_g}.
\QED}


\section{Proof of Theorem \ref{thm:privacy_con}}
\label{app:privacy_con}
(Step 1) By abusing the notation \( \xi \), we first show that \( 
{\xi} \) follows a $\chi^2$ distribution with \( l \) degrees of freedom, i.e., \( \xi \) can be expressed as the sum of squares of \( l \) independent standard normal random variables.

    Let $z = \Sigma_{YY}^{-\frac{1}{2}}(y - \mu_Y)$. Then we have $\bE[zz^{\top}] = I_{m}$. It follows from~\eqref{eq22:jg_pml} that
    \begin{align}\label{pf1:lem1}
        &{\xi} = z^{\top} {\Xi} z\\
        &\quad {\Xi} := \Sigma_{YY}^{-\frac{1}{2}}( \Psi^{\frac{1}{2}}U U^{\top}\Psi^{\frac{1}{2}} + \Sigma_{XY}^\top \Sigma_{XX}^{-1}\Sigma_{XY} )\Sigma_{YY}^{-\frac{1}{2}}. \nonumber
    \end{align}
 
    Then, We show ${\Xi}^2 = {\Xi}$. From~\eqref{eq24:jg_pml}, \eqref{pf6:joint_g}, and~\eqref{pf6:joint_g}, we have
\begin{align*}
\Sigma_{XY}^\top \Sigma_{XX}^{-1}\Sigma_{XY} = \Psi^{\frac{1}{2}} U \hat D  U^\top \Psi^{\frac{1}{2}},
\end{align*}
and
\begin{align*}
\Sigma_{YY}^{-1} 
&= ( \Psi  + \Sigma_{XY}^\top \Sigma_{XX}^{-1}\Sigma_{XY})^{-1} \\
        & =\Psi^{-\frac{1}{2}} ( I_m  +  U \hat D U^\top )^{-1} \Psi^{-\frac{1}{2}},
\end{align*}
where $\hat D := DV^{\top} \Sigma_{XX} V D \succ 0$.
Thus, ${\Xi}$ can be rearranged as
\begin{align}\label{pf2:lem1}
{\Xi}
&= \Sigma_{YY}^{-\frac{1}{2}}( \Psi^{\frac{1}{2}}U U^{\top}\Psi^{\frac{1}{2}} + \Psi^{\frac{1}{2}} U \hat D  U^\top \Psi^{\frac{1}{2}})\Sigma_{YY}^{-\frac{1}{2}} \nonumber\\
&= \Sigma_{YY}^{-\frac{1}{2}}\Psi^{\frac{1}{2}}U(  I_l + \hat D)U^\top \Psi^{\frac{1}{2}}\Sigma_{YY}^{-\frac{1}{2}},
\end{align}
and consequently, ${\Xi}^2$ is
\begin{align*}
{\Xi}^2
&= \Sigma_{YY}^{-\frac{1}{2}}\Psi^{\frac{1}{2}}U(  I_l + \hat D) U^\top ( I_m  +  U\hat D U^\top )^{-1} U\\
&\qquad \times (  I_l + \hat D)U^\top \Psi^{\frac{1}{2}}\Sigma_{YY}^{-\frac{1}{2}}.
\end{align*}
We further simplify this. By the matrix inversion lemma, we obtain
\begin{align*}
 ( I_m  +  U\hat D U^\top )^{-1} = I_m - U (\hat D^{-1} + I_l)^{-1} U^\top.
\end{align*}
Multiplying $U^\top$ from left and $U$ from right yields
\begin{align*}
U^\top ( I_m  +  U\hat D U^\top )^{-1} U 
&= I_l - (\hat D^{-1} + I_l)^{-1}\\
&= I_l - \hat D^{-1}(I_l + \hat D)^{-1}\\
&=(I_l + \hat D)^{-1}.
\end{align*}
Therefore, we have
\begin{align*}
{\Xi}^2
&= \Sigma_{YY}^{-\frac{1}{2}}\Psi^{\frac{1}{2}}U(  I_l + \hat D) (I_l + \hat D)^{-1}(  I_l + \hat D)U^\top \Psi^{\frac{1}{2}}\Sigma_{YY}^{-\frac{1}{2}}\\
&=\Sigma_{YY}^{-\frac{1}{2}}\Psi^{\frac{1}{2}}U(I_l + \hat D)U^\top \Psi^{\frac{1}{2}}\Sigma_{YY}^{-\frac{1}{2}} 
= {\Xi},
\end{align*}
where~\eqref{pf2:lem1} is used in the last equality.

Since ${\Xi}$ is an idempotent matrix, its eigenvalues are either $0$ or $1$ \cite[Problem 5, p.37]{horn2012matrix}. Also, from~\eqref{pf2:lem1}, 
$\rank({\Xi}) = \rank(\hat D) = l$. Therefore, from~\eqref{pf1:lem1}, \( \xi \) follows a $\chi^2$ distribution with \( l \) degrees of freedom.

(Step 2)
From~\eqref{eq:jg_pml}, \eqref{def:PML} is equivalent to
\begin{align*}
    \bP_{f_Y}[\log \det(\Sigma_{XX}) - \log \det(\Gamma) + \frac{1}{2}{\xi} \leq \varepsilon] \geq 1-\delta.
\end{align*}
    Since ${\xi}$ follows an $l$-freedom $\chi^2$ distribution, the above inequality is equivalent to
    \begin{align*}
        \fF_{\chi^2_l}\left(2\varepsilon - 2\log \det(\Sigma_{XX}) + 2\log \det(\Gamma) \right) \geq 1 - \delta.
    \end{align*}
    Since the cumulative distribution function of the $\chi^2$ distribution is continuous and strictly increasing, it always admits the inverse~\cite[Theorem 5.6.5]{bartle2000introduction}, it follows that  
    \begin{align*}
        \fF^{-1}_{\chi^2_l}(1-\delta) \le 2\varepsilon - 2\log \det(\Sigma_{XX}) + 2\log \det(\Gamma)
    \end{align*}
    Dividing both sides by $2$ gives \eqref{eq:privacy_con}. 
    \QED


\section{Proof of Theorem~\ref{thm:pml_g}}
\label{app:pml_g}

For the Gaussian mechanism~\eqref{eq:Gaussian_mech}, we have $\Sigma_{YY} = \Sigma_{ZZ} + \Theta$ and $\Sigma_{XY} = \Sigma_{XZ}$. 
According to Theorem~\ref{thm:privacy_con}, the Gaussian mechanism~\eqref{eq:Gaussian_mech} is $(\varepsilon, \delta)$-PML private if and only if
\begin{align*}
   & \frac{1}{2}\fF^{-1}_{\chi^2_l}(1-\delta) \leq \varepsilon - \log \det(\Sigma_{XX}) + \log \det(\Gamma) \\
   &\qquad \Gamma := \Sigma_{XX} - \Sigma_{XZ} (\Sigma_{ZZ} + \Theta)^{-1} \Sigma_{XZ}^\top.
\end{align*}
It is further equivalent to
\begin{align*}
        \frac{1}{2}\fF^{-1}_{\chi^2_l}(1-\delta) - \varepsilon 
        &\leq  
        \log \det(\Gamma \Sigma_{XX}^{-1})\\
        &= \log \det(\Sigma_{XX}^{-\frac{1}{2}}\Gamma \Sigma_{XX}^{-\frac{1}{2}}).
\end{align*}
Taking the exponential of both sides gives
\begin{align}\label{pf1:pml_g}
    \exp \left(\frac{1}{2}\fF^{-1}_{\chi^2_l}(1-\delta) - \varepsilon \right)
    \leq \det(\Sigma_{XX}^{-\frac{1}{2}}\Gamma \Sigma_{XX}^{-\frac{1}{2}}).
\end{align}
It can be computed that
\begin{align*}
    \Sigma_{XX}^{-\frac{1}{2}}\Gamma\Sigma_{XX}^{-\frac{1}{2}} 
    =I_{n} - \Sigma_{XX}^{-\frac{1}{2}}\Sigma_{XZ} (\Sigma_{ZZ} + \Theta)^{-1} \Sigma_{XZ}^\top \Sigma_{XX}^{-\frac{1}{2}}.
\end{align*}
Using~\eqref{eq:kappa}, \eqref{pf1:pml_g} can be rearranged as
\begin{align}\label{pf2:pml_g}
(\kappa_{\ell}(\varepsilon, \delta))^n
    \leq \det(I_{n} - \Sigma_{XX}^{-\frac{1}{2}}\Sigma_{XZ} (\Sigma_{ZZ} + \Theta)^{-1} \Sigma_{XZ}^\top \Sigma_{XX}^{-\frac{1}{2}}).
\end{align}
Thus, it suffices to show that~\eqref{eq:lmi_con} implies~\eqref{pf2:pml_g}.

Applying the Schur complement to~\eqref{eq:lmi_con} yields
\begin{align*}
    (1-\kappa_{\ell}(\varepsilon, \delta))\Sigma_{XX} - \Sigma_{XZ} (\Sigma_{ZZ} + \Theta)^{-1} \Sigma_{XZ}^\top  \succeq 0.
\end{align*}
Multiplying $\Sigma_{XX}^{-\frac{1}{2}}$ from both side gives
\begin{align*}
    (1-\kappa_{\ell}(\varepsilon, \delta))I_n - \Sigma_{XX}^{-\frac{1}{2}}\Sigma_{XZ} (\Sigma_{ZZ} + \Theta)^{-1} \Sigma_{XZ}^\top \Sigma_{XX}^{-\frac{1}{2}} \succeq 0,
\end{align*}
{i.e.,}
\begin{align*}
    \kappa_{\ell}(\varepsilon, \delta)I_n
    \preceq I_n - \Sigma_{XX}^{-\frac{1}{2}}\Sigma_{XZ} (\Sigma_{ZZ} + \Theta)^{-1} \Sigma_{XZ}^\top \Sigma_{XX}^{-\frac{1}{2}}.
\end{align*}
This implies~\eqref{pf2:pml_g}.
\QED


\section{Proof of Theorem~\ref{thm:pmldp}}
\label{app:pmldp}

The proof is established based on the following lemma.
\begin{seclem}
\label{lem:dp}
    For any $\varepsilon_{\rm DP} \geq 0$, $\delta_{\rm DP} \in (0,1)$, and $\zeta >0$, \eqref{eq:dp_mech} is $(\varepsilon_{\rm DP}, \delta_{\rm DP})$-differentially private for $\operatorname{Adj}^\zeta$ if and only if 
    \begin{align}\label{eq:dp}
     \lambda_{\max} (C^\top \Theta^{-1} C) \leq \left( \frac{\phi^{-1}_{\varepsilon_{\rm DP}}(\delta_{\rm DP})}{\zeta} \right)^2
    \end{align}
    holds.
\end{seclem}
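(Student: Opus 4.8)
The plan is to whiten the Gaussian noise, reduce the problem to the analytic Gaussian mechanism of \cite{balle2018improving}, and then identify the worst-case sensitivity over $\operatorname{Adj}^\zeta$ explicitly. First I would note that, since $\Theta \succ 0$, the linear map $t \mapsto \Theta^{-\frac{1}{2}}t$ is a bijection on $\bR^m$; hence $M$ in~\eqref{eq:dp_mech} is $(\varepsilon_{\rm DP},\delta_{\rm DP})$-differentially private for $\operatorname{Adj}^\zeta$ if and only if $\widetilde M(x) := \Theta^{-\frac{1}{2}}M(x) = \Theta^{-\frac{1}{2}}Cx + \widetilde V$ is, where $\widetilde V \sim \cN(0,I_m)$. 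This reduces the question to an isotropic Gaussian mechanism releasing the linear query $x \mapsto \Theta^{-\frac{1}{2}}Cx$.

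Next I would analyze a single adjacent pair $(x,x') \in \operatorname{Adj}^\zeta$. The two output laws of $\widetilde M$ are $\cN(\Theta^{-\frac{1}{2}}Cx,I_m)$ and $\cN(\Theta^{-\frac{1}{2}}Cx',I_m)$; by rotational invariance they are statistically equivalent to $\cN(0,1)$ versus $\cN(\eta,1)$ with $\eta = \eta(x,x') := |\Theta^{-\frac{1}{2}}C(x-x')|$, so the privacy-loss random variable is Gaussian and the tight privacy profile is the map sending the second argument of $\phi$ to $\varphi(\eta/2 - \varepsilon_{\rm DP}/\eta) - e^{\varepsilon_{\rm DP}}\varphi(-\eta/2 - \varepsilon_{\rm DP}/\eta) = \phi(\varepsilon_{\rm DP},\eta)$; this is precisely \cite[Theorem 8]{balle2018improving} applied to the whitened mechanism. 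I would also invoke from the same reference that $\phi(\varepsilon_{\rm DP},\cdot)$ is continuous and strictly increasing on $(0,\infty)$ with range $(0,1)$, hence invertible there with inverse $\phi^{-1}_{\varepsilon_{\rm DP}}$.

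Then, using monotonicity of $\phi(\varepsilon_{\rm DP},\cdot)$, the DP requirement over \emph{all} of $\operatorname{Adj}^\zeta$ collapses to a single worst case: $\widetilde M$ is $(\varepsilon_{\rm DP},\delta_{\rm DP})$-DP for $\operatorname{Adj}^\zeta$ if and only if $\phi(\varepsilon_{\rm DP},\eta^\star) \le \delta_{\rm DP}$, where $\eta^\star := \sup_{(x,x') \in \operatorname{Adj}^\zeta}\eta(x,x')$, a supremum attained by compactness of $\{d \in \bR^n : |d| \le \zeta\}$. Writing $d = x - x'$ and using the variational characterization of the largest eigenvalue, $\sup_{|d|\le\zeta} d^\top C^\top\Theta^{-1}Cd = \zeta^2\lambda_{\max}(C^\top\Theta^{-1}C)$, so $\eta^\star = \zeta\sqrt{\lambda_{\max}(C^\top\Theta^{-1}C)}$. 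Applying $\phi^{-1}_{\varepsilon_{\rm DP}}$ to $\phi(\varepsilon_{\rm DP},\eta^\star) \le \delta_{\rm DP}$ and squaring then yields~\eqref{eq:dp}.

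The hard part is really just the tight Gaussian privacy profile together with the monotonicity and invertibility of $\phi(\varepsilon_{\rm DP},\cdot)$; both are supplied by \cite{balle2018improving}, so what remains — the whitening reduction and the eigenvalue computation for the extremal direction — is routine. The one step deserving a little care is verifying that the supremum over $\operatorname{Adj}^\zeta$ is attained, which is what legitimizes replacing the universal quantifier over adjacent pairs by the single worst-case Mahalanobis sensitivity $\zeta\sqrt{\lambda_{\max}(C^\top\Theta^{-1}C)}$.
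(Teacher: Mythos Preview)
Your proposal is correct and follows essentially the same approach as the paper: whiten the noise by $\Theta^{-1/2}$ to reduce to the isotropic Gaussian mechanism, invoke the tight analytic privacy profile (the paper cites \cite[Lemma~1]{wang2023differential}, you go directly to \cite{balle2018improving}), and then identify the worst-case $\ell_2$-sensitivity over $\operatorname{Adj}^\zeta$ as $\zeta\,|\Theta^{-1/2}C| = \zeta\sqrt{\lambda_{\max}(C^\top\Theta^{-1}C)}$. Your write-up is more explicit about the monotonicity/invertibility of $\phi(\varepsilon_{\rm DP},\cdot)$ and the Rayleigh-quotient computation, but the underlying argument is the same.
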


\begin{proof}
When $\Theta = I_m$, a necessary and sufficient condition for $(\varepsilon_{\rm DP}, \delta_{\rm DP})$-differentially privacy is $\phi(\varepsilon_{\rm DP}, \zeta |C|) \leq \delta_{\rm DP}$ \cite[Lemma 1]{wang2023differential}. For non-necessarily identity $\Theta \succ 0$, a necessary and sufficient condition becomes  $\phi(\varepsilon_{\rm DP}, \zeta |\Theta^{-1} C|) \leq \delta_{\rm DP}$. This is equivalent to $| \Theta^{-\frac{1}{2}} C| \leq \phi^{-1}_{\varepsilon_{\rm DP}}(\delta_{\rm DP})/ \zeta$.
\end{proof}

Now, we are ready to prove Theorem~\ref{thm:pmldp}.

(Proof of (i)) 
According to Theorem~\ref{thm:privacy_con}, the Gaussian mechanism~\eqref{eq:Gaussian} is $(\varepsilon, \delta)$-PML private if and only if~\eqref{pf1:pml_g} holds.
Using~\eqref{eq:kappa} and $\Gamma^{-1} = \Sigma_{XX}^{-1} + C^{\top} \Theta^{-1} C$ following from~\eqref{eq:Gamma}, \eqref{pf1:pml_g} can be rearranged as
\begin{align}\label{pf1:privacy_con}
   &\frac{1}{ (\kappa_{\ell}(\varepsilon, \delta))^n \det(\Sigma_{XX})} \nonumber\\
   &\ge \det (\Gamma^{-1}) = \det (\Sigma_{XX}^{-1} + C^{\top} \Theta^{-1} C).
\end{align}

From this, we consider obtaining an upper bound on $|\Theta^{-1} C|^2 =\lambda_{\max} (C^{\top} \Theta^{-1} C)$. It follows that
\begin{align*}
&\det (\Sigma_{XX}^{-1} + C^{\top} \Theta^{-1} C) \\
&= \prod_{i =1}^n \lambda_i (\Sigma_{XX}^{-1} + C^{\top} \Theta^{-1} C)\\
&\ge \lambda_{\max} (\Sigma_{XX}^{-1} + C^{\top} \Theta^{-1} C) \prod_{i =1}^{n-1} \lambda_i (\Sigma_{XX}^{-1}).
\end{align*}
By Weyl's inequality~\cite[Thorem 4.3.1]{horn2012matrix}, we have
\begin{align}\label{pf2:privacy_con}
&\det (\Sigma_{XX}^{-1} + C^{\top} \Theta^{-1} C) \nonumber\\
&\ge (\lambda_{\max} (C^{\top} \Theta^{-1} C) + \lambda_{\min} (\Sigma_{XX}^{-1})) \prod_{i =1}^{n-1} \lambda_i (\Sigma_{XX}^{-1}) \nonumber\\
&= \frac{\lambda_{\max} (C^{\top} \Theta^{-1} C) + \lambda_{\min} (\Sigma_{XX}^{-1})}{\lambda_{\max} (\Sigma_{XX}^{-1}) \det(\Sigma_{XX})}.
\end{align}
Therefore, \eqref{pf1:privacy_con} and \eqref{pf2:privacy_con} lead to
\begin{align*}
\lambda_{\max} (C^{\top} \Theta^{-1} C) 
&\le \frac{\lambda_{\max} (\Sigma_{XX}^{-1})}{ (\kappa_{\ell}(\varepsilon, \delta))^n} - \lambda_{\min} (\Sigma_{XX}^{-1})\\
&= \frac{1}{ (\kappa_{\ell}(\varepsilon, \delta))^n \lambda_{\min} (\Sigma_{XX})} - \frac{1}{\lambda_{\max} (\Sigma_{XX})}.
\end{align*}
If~\eqref{eq:pmltodp} holds, we have \eqref{eq:dp}, and consequently, \eqref{eq:dp_mech} is $(\varepsilon_{\rm DP}, \delta_{\rm DP})$-differentially private for $\operatorname{Adj}^\zeta$ by Lemma~\ref{lem:dp}.

(Proof of (ii))
According to Lemma~\ref{lem:dp}, \eqref{eq:dp_mech} is $(\varepsilon_{\rm DP}, \delta_{\rm DP})$-differentially private for $\operatorname{Adj}^\zeta$ if and only if \eqref{eq:dp} holds. It follows from~$\Gamma^{-1} = \Sigma_{XX}^{-1} + C^{\top} \Theta^{-1} C$ and~\eqref{eq:dp} that
\begin{align*}
    \det (\Gamma^{-1})
    &=\det (\Sigma_{XX}^{-1} + C^{\top} \Theta^{-1} C)\\
    &\leq \det \left( \Sigma_{XX}^{-1} + \left(\frac{\phi^{-1}_{\varepsilon_{\rm DP}}(\delta_{\rm DP})}{ \zeta}\right)^2 I_n\right),
\end{align*}
or equivalently,
\begin{align*}
    \log \det (\Gamma) \geq - \log \det \left( \Sigma_{XX}^{-1} + \left(\frac{\phi^{-1}_{\varepsilon_{\rm DP}}(\delta_{\rm DP})}{ \zeta}\right)^2 I_n\right).
\end{align*}
This, together with inequality~\eqref{eq:dptopml}, yields
\begin{align*}
    \log \det (\Gamma) - n \log \kappa_{\ell}(\varepsilon, \delta) -\log\det (\Sigma_{XX}) \geq 0.
\end{align*}
From Theorem~\ref{thm:privacy_con} and~\eqref{eq:kappa}, this implies that~\eqref{eq:Gaussian} is $(\varepsilon, \delta)$-PML private.
\QED


\section{Proof of Theorem~\ref{thm:pmlmi}}
\label{app:pmlmi}

The proof is established based on the following lemma.
\begin{seclem}
\label{lem:mi}
    A linear Gaussian mechanism~\eqref{eq:Gaussian} with $\operatorname{rank}(C) = l$ is $(\varepsilon, \delta)$-PML private if and only if 
\begin{align}
\label{eq:mi_pml}
   - 2I(X;Y) \ge n \log (\kappa_{\ell}(\varepsilon,\delta)).
\end{align} 
holds.   
\end{seclem}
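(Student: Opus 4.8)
The plan is to turn Theorem~\ref{thm:privacy_con} into a statement about mutual information by evaluating $I(X;Y)$ in closed form for the linear Gaussian mechanism. First I would observe that, since $Y = CX + V$ with $X \sim \cN(\mu_X,\Sigma_{XX})$ and $V \sim \cN(0,\Theta)$ independent and $\Theta \succ 0$, the pair $(X,Y)$ is jointly non-degenerate Gaussian with $\Sigma_{XY} = \Sigma_{XX}C^{\top}$ and $\Sigma_{YY} = C\Sigma_{XX}C^{\top} + \Theta \succ 0$, so Assumption~\ref{asm:XY} holds with $l = \rank(\Sigma_{XY}) = \rank(C)$ (using $\Sigma_{XX} \succ 0$). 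Moreover, the conditional law of $X$ given $Y=y$ is Gaussian with covariance $\Gamma = \Sigma_{XX} - \Sigma_{XY}\Sigma_{YY}^{-1}\Sigma_{XY}^{\top}$, which is positive definite since, by the matrix inversion lemma, $\Gamma^{-1} = \Sigma_{XX}^{-1} + C^{\top}\Theta^{-1}C$, already recorded in Appendix~\ref{app:pmldp}.

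Next I would invoke the standard expression for the mutual information of jointly Gaussian vectors, $I(X;Y) = h(X) - h(X\mid Y) = \tfrac12\log\det(\Sigma_{XX}) - \tfrac12\log\det(\Gamma)$, so that $-2I(X;Y) = \log\det(\Gamma) - \log\det(\Sigma_{XX})$. Then, by Theorem~\ref{thm:privacy_con}, the mechanism~\eqref{eq:Gaussian} is $(\varepsilon,\delta)$-PML private if and only if $\tfrac12\fF^{-1}_{\chi^2_l}(1-\delta) \le \varepsilon - \log\det(\Sigma_{XX}) + \log\det(\Gamma)$, equivalently $\tfrac12\fF^{-1}_{\chi^2_l}(1-\delta) - \varepsilon \le \log\det(\Gamma) - \log\det(\Sigma_{XX}) = -2I(X;Y)$.

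Finally I would substitute the definition~\eqref{eq:kappa} of $\kappa_{\ell}(\varepsilon,\delta)$: taking logarithms and multiplying by $n$ gives $n\log\kappa_{\ell}(\varepsilon,\delta) = \tfrac12\fF^{-1}_{\chi^2_l}(1-\delta) - \varepsilon$, so the condition above is exactly~\eqref{eq:mi_pml}, which proves the lemma. There is essentially no obstacle here; the only points requiring care are verifying the Gaussian mutual-information identity and checking that the degrees-of-freedom index $l$ entering Theorem~\ref{thm:privacy_con} (as $\rank\Sigma_{XY}$) and $\kappa_{\ell}$ indeed coincides with $\rank(C)$, which follows from $\Sigma_{XX}\succ 0$.
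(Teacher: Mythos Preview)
Your proposal is correct and follows essentially the same strategy as the paper: compute $I(X;Y)$ in closed form, identify $-2I(X;Y)$ with $\log\det(\Gamma)-\log\det(\Sigma_{XX})$, and then read off the equivalence directly from Theorem~\ref{thm:privacy_con} together with the definition~\eqref{eq:kappa} of $\kappa_{\ell}$. The only difference is cosmetic: the paper evaluates $I(X;Y)=h(Y)-h(Y\mid X)$ and then applies Sylvester's determinant identity to reach the $\Sigma_{XX}$--$\Gamma$ form, whereas you use $I(X;Y)=h(X)-h(X\mid Y)$, which lands on $\tfrac12\log\det(\Sigma_{XX})-\tfrac12\log\det(\Gamma)$ immediately and so is marginally shorter.
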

\begin{proof}
By definition, the MI between \( X \) and \( Y \) is
\[
I(X; Y) = h(Y) - h(Y \mid X),
\]
where \( h(\cdot) \) denotes the differential entropy. Since \( X \sim \cN(\mu_X, \Sigma_{XX}) \) and \( N \sim \cN(0, \Theta) \) are independent, we have
\[
Y \sim \cN(C \mu_X, C \Sigma_{XX} C^\top + \Theta).
\]
Using the entropy formula for a multivariate Gaussian distribution~\cite[Theorem 8.4.1]{cover1999elements}, 
we obtain
\begin{align*}
h(Y) &= \frac{1}{2} \log \left( (2\pi e)^m \det(C \Sigma_{XX} C^\top + \Theta) \right) \\
h(Y \mid X) &= \frac{1}{2} \log \left( (2\pi e)^m \det(\Theta) \right).
\end{align*}
Thus, we obtain
\begin{align*}
I(X; Y) 
&= \frac{1}{2} \log \left( \frac{\det(C \Sigma_{XX} C^\top + \Theta)}{\det(\Theta)} \right)\\
&=\frac{1}{2} \log \det(I_m + \Theta^{-\frac{1}{2}} C \Sigma_{XX} C^\top \Theta^{-\frac{1}{2}}).
\end{align*}
Applying the Sylvester's determinant identity~\cite[Problem 17, p.57]{horn2012matrix} yields
\begin{align*}
I(X; Y) 
&= \frac{1}{2} \log \det \left( I_n + \Sigma_{XX}^{1/2} C^\top \Theta^{-1} C \Sigma_{XX}^{1/2} \right) \\
&= \frac{1}{2} \left( \log \det (\Sigma_{XX}) + \log \det ( \Sigma_{XX}^{-1} +  C^\top \Theta^{-1} C ) \right).
\end{align*}

From the proof of Theorem~\ref{thm:pmldp}, \eqref{eq:Gaussian} is $(\varepsilon, \delta)$-PML private if and only if \eqref{pf1:privacy_con}, or equivalently,
\begin{align*}
   - n\log (\kappa_{\ell}(\varepsilon, \delta))
   &\ge \log\det(\Sigma_{XX}) + \det (\Sigma_{XX}^{-1} + C^{\top} \Theta^{-1} C)\\
   & = 2 I(X; Y) 
\end{align*}
holds. This is nothing but~\eqref{eq:mi_pml}.
\end{proof}

Now, we are ready to prove Theorem~\ref{thm:pmlmi}. 

(Proof of (i))
From Lemma~\ref{lem:mi}, \eqref{eq:Gaussian} is $(\varepsilon,\delta)$-PML private if and only if \eqref{eq:mi_pml} holds. It follows from~\eqref{eq:mi2} and \eqref{eq:mi_pml} that
\begin{align*}
I(X;Y) \leq -\frac{n}{2}\log (\kappa_{\ell}(\varepsilon,\delta)) \le \varepsilon_{\rm MI}.
\end{align*}
Thus, \eqref{eq:Gaussian} is $\varepsilon_{\rm MI}$-mutual-information private.

(Proof of (ii))
If \eqref{eq:Gaussian} is $\varepsilon_{\rm MI}$-mutual-information private, i.e.,~\eqref{eq:mip} holds, it follows from~\eqref{eq:kappa} and~\eqref{eq:mi} that
\begin{align*}
- 2 I(X;Y) 
\geq - 2\varepsilon_{\rm MI} \geq  n \log (\kappa_{\ell}(\varepsilon, \delta)).
\end{align*}
According to Lemma~\ref{lem:mi}, this implies that \eqref{eq:Gaussian} is $(\varepsilon,\delta)$-PML private.
\QED

\section{Proof of Theorem~\ref{thm:kalman_lb}}
\label{app:kalman}

(Step 1)
Applying Theorem~\ref{thm:privacy_con} to the Gaussian mechanism~\eqref{eq:Gaussian} gives the following necessary and sufficient condition for its \((\varepsilon, \delta)\)-PML privacy:
\begin{align*}
   &\log \det(\Sigma_{XX} -\Sigma_{XX} C (C^\top \Sigma_{XX} C + \Theta)^{-1} C^\top \Sigma_{XX})\\
   &\ge \frac{1}{2}\fF^{-1}_{\chi^2_l}(1-\delta) - \varepsilon + \log \det(\Sigma_{XX}).
\end{align*}
Thus, it suffices to show
\begin{align}\label{pf1:kalman_lb}
&\log \det (P) - \log \det (Q) \nonumber\\
&-\log \det(\Sigma_{XX} -\Sigma_{XX} C (C^\top \Sigma_{XX} C + \Theta)^{-1} C^\top \Sigma_{XX}) \nonumber\\
&+ \log \det(\Sigma_{XX}) \ge 0.
\end{align}

(Step 2)
We rearrange~\eqref{pf1:kalman_lb}. First, the matrix inversion lemma gives
\begin{align*}
&\Sigma_{XX} -\Sigma_{XX} C (C^\top \Sigma_{XX} C + \Theta)^{-1} C^\top \Sigma_{XX}\\
&=(\Sigma_{XX}^{-1} + C^\top \Theta C)^{-1}.
\end{align*}

Next, let $P^- \in \bS_{++}^n$ be the (unique) solution to the following discrete-time algebraic Riccati equation:
\begin{align}\label{eq:Ric}
P^- = A P^- A^{\top} + Q - A P^- C^\top (C^{\top} P^- C + \Theta)^{-1} C P^- A^\top.
\end{align}
Then, the steady-state covariance $P$ satisfies 
\begin{align*}
P 
&= P^- - P^- C^\top (C P^- C^\top + \Theta )^{-1} C P^-\\
&= ((P^-)^{-1} + C^\top \Theta C)^{-1}.
\end{align*}

Thus, \eqref{pf1:kalman_lb} is equivalent to
\begin{align}\label{pf2:kalman_lb}
&\log \det(\Sigma_{XX}^{-1} + C^\top \Theta C) + \log \det(\Sigma_{XX}) \nonumber\\
&\ge \log \det ((P^-)^{-1} + C^\top \Theta C ) + \log \det (Q).
\end{align}

(Step 3)
We show~\eqref{pf2:kalman_lb}. Subtracting~\eqref{eq:Ric} from~\eqref{eq:Lya} gives the following Lyapunov equation  
\begin{align*}
(\Sigma_{XX}-P^-) 
&= A (\Sigma_{XX}-P^-) A^{\top} \\
&\quad +  A P^- C^\top (C^{\top} P^- C + \Theta)^{-1} C P^- A^\top.
\end{align*}
Since $A$ is Schur stable and $A P^- C^\top (C^{\top} P^- C + \Theta)^{-1} C P^- A^\top\succeq 0$, we have $\Sigma_{XX}\succeq P^-$.
This implies
\begin{align*}
&\log \det(\Sigma_{XX}^{-1} + C^\top \Theta C) + \log \det(\Sigma_{XX})\\
&=\log \det(I + \Sigma_{XX}^{\frac{1}{2}}C^\top \Theta C \Sigma_{XX}^{\frac{1}{2}})\\
&\ge \log \det(I + (P^-)^{\frac{1}{2}}C^\top \Theta C (P^-)^{\frac{1}{2}})\\
&=  \log \det((P^-)^{-1} + C^\top \Theta C) + \log \det((P^-)).
\end{align*}

Next, by usng the matrix inversion lemma, \eqref{eq:Ric} can be rearranged as
\begin{align*}
P^- = A ((P^-)^{-1} + C^\top \Theta C)^{-1} A^{\top} + Q,
\end{align*}
which implies $\log \det((P^-)) \ge \log \det(Q)$. Therefore, we have~\eqref{pf2:kalman_lb}.
\QED






\bibliographystyle{ieeetr}
\bibliography{ref}

\begin{IEEEbiography}{Le Liu}
received the Bachelor of Engineering degree in power and energy engineering and Master of Engineering degree in systems and control
from Northwest A\&F University and Dalian University of Technology, China, in 2019 and 2022, respectively. He is currently working toward the
Ph.D. Degree in systems and control in Faculty of Science and Engineering from
University of Groningen, The Netherlands.
His research interests include privacy of control systems, networked control systems and nonlinear systems.
\end{IEEEbiography}

\begin{IEEEbiography}{Yu Kawano}(M'13) 
has since 2019 been an Associate Professor in the Graduate School of Advanced Science and Engineering at Hiroshima University. He received the M.S. and Ph.D. degrees in Engineering from Osaka University, Japan, in 2011 and 2013, respectively. As a Post-Doctoral Researcher, he then joined Kyoto University, Japan and moved in 2016 to the University of Groningen, The Netherlands. He has held visiting research positions at Tallinn University of Technology, Estonia, the University of Groningen, the University of Pavia, Italy, and the Indian Institute of Technology Bombay, India. His research interests include nonlinear systems, complex networks, model reduction, and privacy of control systems. He is an Associate Editor for Systems and Control Letters, IEEE CSS Conference Editorial Board, and EUCA Conference Editorial Board.
\end{IEEEbiography}

\begin{IEEEbiography}{Ming Cao} has since 2016 been a professor of networks and robotics with the Engineering and Technology Institute (ENTEG) at the University of Groningen, the Netherlands, where he started as an assistant professor in 2008. Since 2022 he is the director of the Jantina Tammes School of Digital Society, Technology and AI at the same university. He received the Bachelor degree in 1999 and the Master degree in 2002 from Tsinghua University, China, and the Ph.D. degree in 2007 from Yale University, USA, all in electrical engineering. From 2007 to 2008, he was a Research Associate at Princeton University, USA. He worked as a research intern in 2006 at the IBM T. J. Watson Research Center, USA. He is the 2017 and inaugural recipient of the Manfred Thoma medal from the International Federation of Automatic Control (IFAC) and the 2016 recipient of the European Control Award sponsored by the European Control Association (EUCA). He is an IEEE fellow. He is a Senior Editor for Systems and Control Letters, an Associate Editor for IEEE Transactions on Automatic Control, IEEE Transaction of Control of Network Systems and IEEE Robotics \& Automation Magazine, and was an associate editor for IEEE Transactions on Circuits and Systems and IEEE Circuits and Systems Magazine. He is a member of the IFAC Council and a vice chair of the IFAC Technical Committee on Large-Scale Complex Systems. His research interests include autonomous robots and multi-agent systems, complex networks and decision-making processes. 
\end{IEEEbiography}

\end{document}